\newtheorem{theorem}{Theorem}
\newtheorem{proposition}{Proposition}
\newtheorem{appendix_proposition}{Proposition}
\newtheorem{appendix_theorem}{Theorem}
\newenvironment{proof}[1][Proof]{\begin{trivlist}
\item[\hskip \labelsep {\bfseries #1}]}{\end{trivlist}}
\newcommand{\argmin}{\mathrm{arg}\displaystyle\min}
\newcommand{\argmax}{\mathrm{arg}\displaystyle\max}
\newcommand{\maxmax}{\displaystyle\max}
\newcommand{\qed}{\nobreak \ifvmode \relax \else
      \ifdim\lastskip<1.5em \hskip-\lastskip
      \hskip1.5em plus0em minus0.5em \fi \nobreak
      \vrule height0.75em width0.5em depth0.25em\fi}
\newcommand{\RNum}[1]{\uppercase\expandafter{\romannumeral #1\relax}}
\newcolumntype{M}[1]{>{\centering\arraybackslash}m{#1}}
\newcolumntype{N}{@{}m{0pt}@{}}
\DeclareFontFamily{OT1}{pzc}{}
\DeclareFontShape{OT1}{pzc}{m}{it}{<-> s * [1.10] pzcmi7t}{}
\DeclareMathAlphabet{\mathpzc}{OT1}{pzc}{m}{it}
\begin{document}
%
\title{Large Alphabet Source Coding using Independent Component Analysis}
%
%
%

\author{Amichai~Painsky,~\IEEEmembership{Member,~IEEE,}
        Saharon~Rosset and~Meir~Feder,~\IEEEmembership{Fellow,~IEEE}

\thanks{A. Painsky and S. Rosset are with the Statistics Department, Tel Aviv University, Tel Aviv, Israel. contact: amichaip@eng.tau.ac.il}
\thanks{M. Feder is with the Department of Electrical Engineering, Tel Aviv University, Tel Aviv, Israel}
\thanks{The material in this paper was presented in part at the International Symposium on Information Theory (ISIT) 2014 and the Data Compression Conference (DCC) 2015.}}

%
%

\markboth{IEEE TRANSACTIONS ON INFORMATION THEORY}%
{Shell \MakeLowercase{\textit{et al.}}: Bare Demo of IEEEtran.cls for Journals}
%



\maketitle

\begin{abstract}
Large alphabet source coding is a basic and well--studied problem in data compression. It has many applications such as compression of natural language text, speech and images. The classic perception of most commonly used methods is that a source is best described over an alphabet which is at least as large as the observed alphabet. In this work we challenge this approach and introduce a conceptual framework in which a large alphabet source is decomposed into ``as statistically independent as possible" components. This decomposition allows us to apply entropy encoding to each component separately, while benefiting from their reduced alphabet size.  We show that in many cases, such decomposition results in a sum of marginal entropies which is only slightly greater than the entropy of the source. Our suggested algorithm, based on a generalization of the Binary Independent Component Analysis, is applicable for a variety of large alphabet source coding setups. This includes the classical lossless compression, universal compression and high-dimensional vector quantization. In each of these setups, our suggested approach outperforms most commonly used methods. Moreover, our proposed framework is significantly easier to implement in most of these cases.      
\end{abstract}


%
\IEEEpeerreviewmaketitle

\section{Introduction}
%
%
%
%
\IEEEPARstart{A}{ssume} a source over an alphabet size $m$, from which a sequence of $n$ independent samples are drawn. The classical source coding problem is concerned with finding a sample-to-codeword mapping, such that the average codeword length is minimal, and the samples may be uniquely decodable. This problem was studied since the early days of information theory, and a variety of algorithms \cite{huffman1952method,witten1987arithmetic} and theoretical bounds \cite{cover2012elements} were introduced throughout the years.

The classical source coding problem usually assumes an alphabet size $m$ which is  small, compared with $n$. Here, we focus on a more difficult (and common) scenario, where the source's alphabet size is considered ``large" (for example, a word-wise compression of natural language texts).
In this setup, $m$ takes values which are either comparable (or even larger) than the length of the sequence $n$. 
The main challenge in large alphabet source coding is that the redundancy of the code, formally defined as the excess number of bits used over the source's entropy, typically increases with the alphabet size \cite{davisson1973universal}.

In this work we propose a conceptual framework for large alphabet source coding, in which we reduce the alphabet size by decomposing the source into multiple components which are ``as statistically independent as possible". This allows us to encode each of the components separately, while benefiting from the reduced redundancy of the smaller alphabet. 

To utilize this concept we introduce a framework based on a generalization of the  Binary Independent Component Analysis (BICA) method \cite{painsky2014generalized}. This framework efficiently searches for an invertible transformation  which minimizes the difference between the sum of marginal entropies (after the transformation is applied) and the joint entropy of the source. Hence, it minimizes the (attainable) lower bound on the average codeword length, when applying marginal entropy coding.

We show that while there exist sources which cannot be efficiently decomposed, their portion (of all possible sources over a given alphabet size) is small. Moreover,  we show that the difference between the sum of marginal entropies (after our transformation is applied) and the joint entropy, is bounded, on average, by a small constant for every $m$ (when the averaging takes place uniformly, over all possible sources of the same alphabet size $m$). This implies that our suggested approach is suitable for many sources of increasing alphabet size.

We demonstrate our method in a variety of large alphabet source coding setups. This includes the classical lossless coding, when the probability distribution of the source is known both to the encoder and the decoder, universal lossless coding, in which the decoder is not familiar with the distribution of the source, and lossy coding in the form of vector quantization. We show that our approach outperforms currently known methods in all these setups, for a variety of typical sources.  

The rest of this manuscript is organized as follows: After a short notation section, we review the work that was previously done in large alphabet source coding in Section \RNum{3}. Section \RNum{4} presents the generalized BICA problem, propose two different algorithms, and demonstrate their behavior on average and in worst-case. In Section \RNum{5} we apply our suggested framework to the classical lossless coding problem, over large alphabets. We then extend the discussion to universal compression in Section \RNum{6}. In Section \RNum{7} we further demonstrate our approach on vector quantization, with a special attention to high dimensional sources and low distortion. 
\section{Notation}

Throughout this paper we use the following standard notation: underlines denote vector quantities, where their respective components are written without underlines but with index. For example, the components of the $n$-dimensional vector $\underline{X}$ are $X_1, X_2, \dots X_n$.
Random variables are denoted with capital letters while their realizations are denoted with the respective lower-case letters. $P_{\underline{X}}\left(\underline{x}\right)  \triangleq P(X_1 = x_1, X_2= x_2\dots) $ is the probability function of  $\underline{X}$ while $H\left(\underline{X}\right)$ is the entropy of $\underline{X}$. This means  $H\left(\underline{X}\right)=-\sum_{\underline{x}} P_{\underline{X}}\left(\underline{x}\right) \log{P_{\underline{X}}\left(\underline{x}\right)}$ where the $\log{}$ function denotes a logarithm of base $2$ and $\lim_{x \to 0} x\log{(x)} = 0$. Specifically, we refer to the binary entropy of a Bernoulli distributed random variable $X \sim \operatorname{Ber}(p)$ as $H_b(X)$, while we denote the binary entropy function as $h_b (p)=-p\log{p}-(1-p)\log{(1-p)}$. 

\section{Previous Work}

In the classical lossless data compression framework, one usually assumes that both the encoder and the decoder are familiar with the probability distribution of the encoded source, $\underline{X}$. Therefore, encoding a sequence of $n$ memoryless samples drawn form this this source takes on average at least $n$ times its entropy $H\left(\underline{X}\right)$, for sufficiently large $n$ \cite{cover2012elements}. In other words, if $n$ is large enough to assume that the joint empirical entropy of the samples, $\hat{H}\left(\underline{X}\right)$, is close enough to the true joint entropy of the source, $H\left(\underline{X}\right)$, then $H\left(\underline{X}\right)$ is the minimal average number of bits required to encode a source symbol.
Moreover, it can be shown \cite{cover2012elements} that the minimum average codeword length,  $\bar{l}_{min}$,  for a uniquely decodable code, satisfies
\begin{equation}
\label{optimal_code}
H\left(\underline{X}\right) \leq \bar{l}_{min} \leq H\left(\underline{X}\right)+1.
\end{equation}

Entropy coding is a lossless data compression scheme that strives to achieve the lower bound, $\bar{l}_{min}=H\left(\underline{X}\right)$. Two of the most common entropy coding techniques are Huffman coding \cite{huffman1952method} and arithmetic coding \cite{witten1987arithmetic}.

The Huffman algorithm is an iterative construction of variable-length code table for encoding the source symbols. The algorithm derives this table from the probability of occurrence of each source symbol. Assuming these probabilities are dyadic (i.e.,
$-\log(p(\underline{x}))$ is an integer for every symbol $\underline{x} \in \underline{X}$), then the Huffman algorithm achieves $\bar{l}_{min}=H\left(\underline{X}\right)$. However, in the case where the probabilities are not dyadic then the Huffman code does not achieve the lower-bound of (\ref{optimal_code}) and may result in an average codeword length of up to  $H\left(\underline{X}\right)+1$ bits.
Moreover, although the Huffman code is theoretically easy to construct (linear in the number of symbols, assuming they are sorted according to their probabilities) it is practically a challenge to implement when the number of symbols increases \cite{moffat1997implementation}.  

Huffman codes achieve the minimum average codeword length among all uniquely decodable codes that assign a separate codeword to each symbol. However, if the probability of one of the symbols is close to $1$, a Huffman code with an average codeword length close to the entropy can only be constructed if a large number of symbols is jointly coded. The popular method of arithmetic coding is designed to overcome this problem. 

In arithmetic coding, instead of using a sequence of bits to represent a symbol, we represent it by a subinterval of the unit interval \cite{witten1987arithmetic}. This means that the code for a sequence of symbols is an interval whose length decreases as we add more symbols to the sequence. This property allows us to have a coding scheme that is incremental. In other words, the code for an extension to a sequence can be calculated simply from the code for the original sequence. Moreover, the codeword lengths are not restricted to be integral.
The arithmetic coding  procedure achieves an average length for the block that is within $2$ bits of the entropy. Although this is not necessarily optimal for any fixed block
length (as we show for Huffman code), the procedure is incremental and can be used for any block-length. Moreover, it does not requir the source probabilities to be dyadic. 
However, arithmetic codes are more complicated to implement and are a less likely to practically achieve the entropy of the source as the number of symbols increases. More specifically, due to the well-known underflow and overflow problems, finite precision implementations of the traditional adaptive arithmetic coding cannot work if the size of the source exceeds a certain limit \cite{yang2000universal}. For example, the widely used arithmetic coder by Witten et al. \cite{witten1987arithmetic} cannot work when the alphabet size is greater than $2^{15}$. The improved version of arithmetic coder by Moffat et al. \cite{moffat1998arithmetic} extends the alphabet to size $2^{30}$ by using low precision arithmetic, at the expense of compression performance.  

Notice that a large number of symbols not only results with difficulties in implementing entropy codes: as the alphabet size increases, we require an exponentially larger number of samples for the empirical entropy to converge to the true entropy. 

Therefore, when dealing with sources over large alphabets we usually turn to a universal compression framework. Here, we assume that the empirical probability distribution is not necessarily equal to the true distribution and henceforth unknown to the decoder. This means that a compressed representation of the samples now involves with two parts -- the compressed samples and an overhead redundancy (where the redundancy is defined as the number of bits used to transmit a message, minus the number of bits of actual information in the message). 

As mentioned above, encoding a sequence of $n$ samples, drawn from a memoryless source $\underline{X}$, requires at least $n$ times the empirical entropy, $\hat{H}(\underline{X})$. This is attained through entropy coding according to the source's empirical distribution. The redundancy, on the other hand, may be quantified in several ways. 

One common way of measuring the coding redundancy is through the minimax criterion \cite{davisson1973universal}. Here, the \textit{worst-case redundancy} is the lowest number of extra bits (over the empirical entropy) required in the worst case (that is, among all sequences) by any possible encoder.
Many worst-case redundancy results are known when the source's alphabet is finite. A succession of papers initiated by \cite{shtarkov1977coding} show that for the collection $\mathcal{I}_m^n$ of i.i.d. distributions over length-$n$ sequences drawn from an alphabet of a fixed size $m$, the worst-case redundancy behaves asymptotically as $\frac{m-1}{2}\log{\frac{n}{m}}$, as $n$ grows.

Orlitsky and Santhanam \cite{orlitsky2004speaking} extended this result to cases where $m$ varies with $n$. The \textit{standard compression} scheme they introduce differentiates between three situations in which $m=o(n)$, $n=o(m)$ and $m=\Theta(n)$. They provide leading term asymptotics and bounds to the worst-case minimax redundancy for these ranges of the alphabet size. Szpankowski and Weinberger \cite{szpankowski2012minimax} completed this study, providing the precise asymptotics to these ranges. For the purpose of our work we adopt the leading terms of their results, showing that the worst-case minimax redundancy, when $m \rightarrow \infty$, as $n$ grows, behaves as follows:
\begin{enumerate} [i]
\item For $m=o(n)$: \hfill \makebox[0pt][r]{%
            \begin{minipage}[b]{\textwidth}
              \begin{equation}
		\label{m=o(n)}
                 \quad \quad \quad \quad  \quad \quad \hat{R}(\mathcal{I}_m^n) \backsimeq \frac{m-1}{2}\log{\frac{n}{m}}+\frac{m}{2}\log{e}+\frac{m\log{e}}{3}\sqrt{\frac{m}{n}}  
              \end{equation}
          \end{minipage}}
\item For $n=o(m)$: \hfill \vspace{5pt}  \makebox[0pt][r]{%
            \begin{minipage}[b]{\textwidth}
              \begin{equation}
		\label{n=o(m)}
                  \; \quad  \quad \quad  \hat{R}(\mathcal{I}_m^n)  \backsimeq n\log{\frac{m}{n}}+\frac{3}{2}\frac{n^2}{m}\log{e}-\frac{3}{2}\frac{n}{m}\log{e}
              \end{equation}
          \end{minipage}}
\item $m=\alpha n+l(n)$: \hfill \makebox[0pt][r]{%
            \begin{minipage}[b]{\textwidth}
              \begin{equation}
		  \label{m=theta(n)}
               \; \; \;  \quad \quad \quad
 \hat{R}(\mathcal{I}_m^n)  \backsimeq n\log{B_\alpha}+l(n)\log{C_\alpha}-\log{\sqrt{A_\alpha}} 
              \end{equation}

          \end{minipage}}
\end{enumerate}
\hspace{26pt} where $\alpha$ is a positive constant, $l(n)=o(n)$ and
\begin{equation}
{\displaystyle C_\alpha \triangleq \frac{1}{2}+\frac{1}{2}\sqrt{1+\frac{4}{\alpha}}\quad , \quad A_\alpha \triangleq C_\alpha+\frac{2}{\alpha} \quad , \quad  B_\alpha \triangleq \alpha C_\alpha^{\alpha+2} e^{-\frac{1}{C_\alpha}}}. \nonumber 
\end{equation}

In a landmark paper from 2004, Orlitsky et al. \cite{orlitsky2004universal} presented a novel framework for universal compression of memoryless sources over unknown and possibly infinite alphabets.  
According to their framework, the description of any string, over any alphabet, can be viewed as consisting of two parts: the symbols appearing in the
string and the pattern that they form. For example, the string ``abracadabra" can be described by conveying the \textit{pattern} ``12314151231"
and the \textit{dictionary}

\begin{table}[!ht]
\centering
{
\begin{tabular}{|c|c|c|c|c|c|}
\hline
index &1 &2 &3 &4&5 \\
\hline
letter &a &b &r &c&d \\
\hline
\end{tabular}}

\end{table} 
\noindent Together, the pattern and dictionary specify that the string ``abracadabra" consists of the first letter to appear (a), followed by the second letter to appear (b), then by the third to appear (r), the first that appeared (a again), the fourth (c), etc.
Therefore, a compressed string involves with a compression of the pattern and its corresponding dictionary. Orlitsky et al. derived the bounds for pattern compression, showing that the redundancy of patterns compression under  i.i.d. distributions over potentially infinite alphabets is bounded by $\left( \frac{3}{2}\log{e} \right)n^{1/3}$. Therefore, assuming the alphabet size is $m$ and the number of uniquely observed symbols is $n_0$, the dictionary can be described in $n_0\log{m}$ bits, leading to an overall lower bound of $n_0\log{m}+n^{1/3}$ bits on the compression redundancy. 

An additional (and very common) universal compression scheme is the canonical Huffman coding \cite{witten1999managing}.  A canonical Huffman code is a particular type of Huffman code with unique properties which allow it to be described in a very compact manner. The advantage of a canonical Huffman tree is that one can encode a codebook in fewer bits than a fully described tree. Since a canonical Huffman codebook can be stored especially efficiently, most compressors start by generating a non-canonical Huffman codebook, and then convert it to a canonical form before using it. 

In canonical Huffman coding the bit lengths of each symbol are the same as in the traditional Huffman code. However, each code word is replaced with new code words (of the same length), such that a subsequent symbol is assigned the next binary number in sequence. 
For example, assume a Huffman code for four symbols, A to D:
\begin{table}[!ht]
\centering
{
\begin{tabular}{|c|c|c|c|c|}
\hline
symbol &A &B&C&D \\
\hline
codeword &11 &0 &101 &100 \\
\hline
\end{tabular}}
\end{table} 

Applying canonical Huffman coding to it we have
\begin{table}[!ht]
\centering
{
\begin{tabular}{|c|c|c|c|c|}
\hline
symbol &B &A&C&D \\
\hline
codeword &0 &10 &110 &111 \\
\hline
\end{tabular}}
\end{table} 

This way we do not need to store the entire Huffman mapping but only a list of all symbols in increasing order by their bit-lengths and record the number of symbols for each bit-length. This allows a more compact representation of the code, hence, lower redundancy.

An additional class of data encoding methods which we refer to in this work is lossy compression. In the lossy compression setup one applies inexact approximations for representing the content that has been encoded. In this work we focus on vector quantization, in which a high-dimensional vector $\underline{X} \in \mathbb{R}^d$ is to be represented by a finite number of points. Vector quantization works by clustering the observed samples of the vector $\underline{X}$ into groups, where each group is represented by its centroid point, such as in $k$-means and other clustering algorithms. Then, the centroid points that represent the observed samples are compressed in a lossless manner. 

In the lossy compression setup, one is usually interested in minimizing the amount of bits which represent the data for a given a distortion measure (or equivalently, minimizing the distortion for a given compressed data size).  The rate-distortion function defines the lower bound on this objective. It is defined as

\begin{equation}
R\left(D\right)=\min_{P(\underline{Y}|\underline{X})}I(\underline{X};\underline{Y})\,\, s.t. \,\, \mathbb{E} \left\{D(\underline{X},\underline{Y}) \right\} \leq D
\end{equation}

where $\underline{X}$ is the source, $\underline{Y}$ is recovered version of $\underline{X}$ and $D(\underline{X},\underline{Y})$ is some distortion measure between $\underline{X}$ and $\underline{Y}$. Notice that since the quantization is a deterministic mapping between $\underline{X}$ and $\underline{Y}$, we have that $I(\underline{X};\underline{Y})=H(\underline{Y})$. 

The Entropy Constrained Vector Quantization (ECVQ) is an iterative method for clustering the observed samples from $\underline{X}$ into centroid points which are later represented by a minimal average codeword length. The ECVQ algorithm minimizes the Lagrangian   

\begin{equation}
\label{ECVQ_intro}
L =\mathbb{E}\left\{D(\underline{X},\underline{Y})\right\}+\lambda\mathbb{E}\left\{l(\underline{X})\right\}
\end{equation}
where $\lambda$ is the Lagrange multiplier and $\mathbb{E}\left(l(\underline{X})\right)$ is the average codeword length for each symbol in $\underline{X}$. The ECVQ algorithm performs an iterative local minimization method similar to the generalized Lloyd algorithm \cite{lloyd1982least}. This means that for a given clustering of samples it constructs an entropy code to minimize the average codeword lengths of the centroids. Then, for a given coding of centroids it clusters the observed samples such that the average distortion is minimized, biased by the length of the codeword. This process continues until a local convergence occurs. 

The ECVQ algorithm performs local optimization (as a variant of the $k$-means algorithm) which is also not very scalable for an increasing number of samples. This means that in the presence of a large number of samples, or when the alphabet size of the samples is large enough, the clustering phase of the ECVQ becomes impractical. Therefore, in these cases, one usually uses a predefined lattice quantizer and only constructs a corresponding codebook for its centroids.

It is quite evident that large alphabet sources entails a variety of difficulties in all the compression setups mentioned above: it is more complicated to construct an entropy code for, it results in a great redundancy when universally encoded and it is much more challenging to design a vector quantizer for. In the following sections we introduce a framework which is intended to overcome these drawbacks.

\section{Generalized Binary Independent Component Analysis}

A common implicit assumption to most compression schemes in that the source is best represented over its observed alphabet size.
We would like to challenge this assumption, suggesting that in some cases there exists a transformation which decomposes a source into multiple ``as independent as possible" components whose alphabet size is much smaller.

\subsection{Problem Formulation}

Suppose we are given a binary random vector $\underline{X} \sim \underline{p}$ of a dimension $d$. We are interested in an invertible transformation $\underline{Y}=g(\underline{X})$ such that \underline{Y} is of the same dimension and alphabet size, $g:2^d \rightarrow 2^d$. In addition we would like the components (bits) of \underline{Y} to be as ``statistically independent as possible".
Notice that an invertible transformation of a vector \underline{X} is actually a one-to-one mapping (i.e. permutation) of its $m=2^d$ alphabet symbols. Therefore, there exist $2^d!$ possible invertible transformations. 

To quantify the statistical independence among the components of the vector \underline{Y} we use the well-known \textit{total correlation} measure as a multivariate generalization of the mutual information, 
\begin{equation}\label{eq:min_criterion}
C(\underline{Y})={\displaystyle \sum_{j=1}^{d}{H_b(Y_j)}-H(\underline{Y})}.
\end{equation}

This measure can also be viewed as the cost of encoding the vector $\underline{Y}$ component-wise, as if its components were statistically independent, compared to its true entropy. Notice that the total correlation is non-negative and equals zero iff the components of $\underline{Y}$ are mutually independent. Therefore, ``as statistically independent as possible" may be quantified by minimizing $C(\underline{Y})$. The total correlation measure was first considered as an objective for minimal redundancy representation by Barlow \cite{barlow1989finding}. It is also not new to finite field ICA problems, as demonstrated in \cite{attux2011immune}.

Since we define \underline{Y} to be an invertible transformation of \underline{X} we have $H(\underline{Y})=H(\underline{X})$ and our minimization objective is
\begin{equation}
{\displaystyle \sum_{j=1}^{d}{H_b(Y_j)} \rightarrow min.}
\label{eq:sum_ent_min_binary}
\end{equation}

We notice that $P(Y_j=0)$ is the sum of probabilities of all words whose $j^{th}$ bit equals $0$. We further notice that the optimal transformation is not unique. For example, we can always invert the  $j^{th}$ bit of all words, or even shuffle the bits, to achieve the same minimum.

In the following sections we review and introduce several methods for solving (\ref{eq:sum_ent_min_binary}). 
As a first step towards this goal we briefly describe the generalized BICA method. A complete derivation of this framework appears in \cite{painsky2015generalized}. Then, Sections \ref{ordering solution},\ref{worst case} and \ref{average case}  provide a simplified novel method for (\ref{eq:sum_ent_min_binary}) and discuss its theoretical properties.

\subsection{Piece-wise Linear Relaxation Algorithm}
\label{piece-wise solution}
In this section we briefly review our suggested method, as it appears in detail in \cite{painsky2015generalized}.

Let us first notice that the problem we are dealing with (\ref{eq:sum_ent_min_binary}) is a concave minimization problem over a discrete permutation set which is a hard problem. However, let us assume for the moment that instead of our ``true" objective (\ref{eq:sum_ent_min_binary}) we have a simpler linear objective function. That is, 

\begin{equation} \label{eq:linear}
{\displaystyle L(\underline{Y})=\sum_{j=1}^{d}{a_j \pi_j+b_j}=\sum_{i=1}^{m}{c_i P(\underline{Y}=y(i))}+d}
\end{equation}
where $\pi_j=p(Y_j=0)$ and the last equality changes the summation over $d$ bits to a summation over all $m=2^d$ symbols.

In order to minimize this objective function over the $m$ given probabilities $\underline{p}$ we simply sort these probabilities in a descending order and allocate them such that the largest probability goes with the smallest coefficient $c_i$ and so on. Assuming both the coefficients and the probabilities are known and sorted in advance, the complexity of this procedure is linear in $m$.

We now turn to the generalized BICA problem, defined in (\ref{eq:sum_ent_min_binary}). Since our objective is concave we would first like to bound it from above with a piecewise linear function which contains $k$ pieces, as shown in Figure \ref{fig:piecewise linear}. We show that solving the piecewise linear problem approximates the solution to (\ref{eq:sum_ent_min_binary}) as closely as we want.

\begin{figure}[!ht]
\centering
\includegraphics[width = 0.45\textwidth,bb= 85 272 500 520,clip]{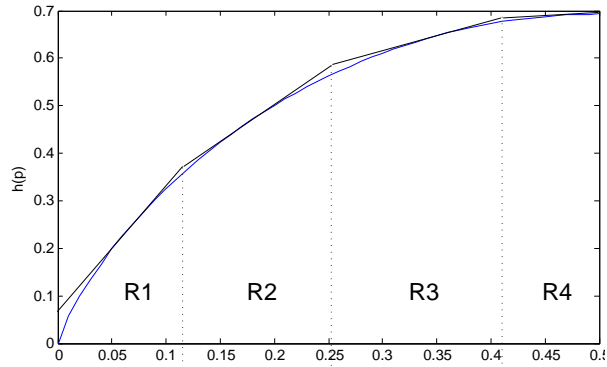}
\caption{piecewise linear ($k=4$) relaxation to the binary entropy}
\label{fig:piecewise linear}
\end{figure}

First, we notice that all $\pi_j's$ are exchangeable (in the sense that we can always interchange them and achieve the same result). This means we can find the optimal solution to the piece-wise linear problem by going over all possible combinations of ``placing" the $d$ variables $\pi_j$ in the $k$ different regions of the piece-wise linear function. For each of these combinations we need to solve a linear problem (\ref{eq:linear}), where the minimization is with respect to allocation of the  $m$ given probabilities $\underline{p}$, and with additional constraints on the ranges of each $\pi_j$. For example, assume $d=3$ and the optimal solution is such that two $\pi_j's$ (e.g. $\pi_1$ and $\pi_2$) are at the first region $R_1$ and $\pi_3$ is at the second region $R_2$ , then we need to solve the following constrained linear problem,

\begin{equation} \label{eq:linear_program}
\begin{aligned}
& {\text{minimize}}
& &  a_1 \cdot (\pi_1+\pi_2)+2b_1+a_2\cdot \pi_3 +b_2 \\
& \text{subject to}
& & \pi_1, \pi_2 \in R_1 , \pi_3 \in R_2
\end{aligned}
\end{equation}

where the minimization is over the allocation of the given $\left\{p_i\right\}_{i=1}^{m}$, which determine the corresponding $\pi_j$'s, as demonstrated in (\ref{eq:linear}). This problem again seems hard. However, if we attempt to solve it without the constraints we notice the following:

\begin{enumerate}

\item	If the collection of $\pi_j's$ which define the optimal solution to the unconstrained linear problem happens to meet the constraints then it is obviously the optimal solution with the constraints.
\item	If the collection of $\pi_j's$ of the optimal solution does not meet the constraints (say, $\pi_2 \in R_2$) then, due to the concavity of the entropy function, there exists a different combination with a different constrained linear problem,

\begin{equation*} \label{eq:linear_program_2}
\begin{aligned}
& {\text{minimize}}
& &  a_1\pi_1+b_1+a_2(\pi_2+\pi_3)+2b_2 \\
& \text{subject to}
& & \pi_1 \in R_1  \;  \pi_2,\pi_3 \in R_2
\end{aligned}
\end{equation*}
in which this set of $\pi_j's$ necessarily achieves a lower minimum (since $a_2 x+b_2<a_1 x+b_1$ $\forall x \in R_2$).
\end{enumerate}

Therefore, in order to find the optimal solution to the piece-wise linear problem, all we need to do is to go over all possible combinations of placing the $\pi_j's$ in $k$ different regions, and for each combination solve an unconstrained linear problem (which is solved in linear time in $m$). If the solution does not meet the constraint then it means that the assumption that the optimal $\pi_j$ reside within this combination's regions is false. Otherwise, if the solution does meet the constraint, it is considered as a candidate for the global optimal solution. 

The number of combinations we need to go through is equivalent to the number of ways of placing $d$ identical balls in $k$ boxes, which is (for a fixed $k$), 
\begin{equation}
\left(\begin{array}{c} 
d+k-1\\d\end{array}\right)=O(d^k).
\end{equation}
Assuming the coefficients are all known and sorted in advance, the overall complexity of our suggested algorithm, as $d \rightarrow \infty$, is just $O(d^k \cdot 2^d)=O(m\log^km)$.   

Notice that any approach which exploits the full statistical description of \underline{X}  would require going over the probabilities of all of its symbols at least once. Therefore, a computational load of at least $O(m)=O(2^d)$ seems inevitable. Still, this is significantly smaller  then $O(m!)=O(2^d!)$, required by brute-force search over all possible permutations.

It is also important to notice that even though the asymptotic complexity of our approximation algorithm is $O(m\log^km)$, it takes a few seconds to run an entire experiment on a standard personal computer (with $d=1024$ and $k=8$, for example). The reason is that the $m$ factor comes from the complexity of sorting a vector and multiplying two vectors, operations which are computationally efficient on most available software. Moreover, if we assume that the linear problems coefficients (\ref{eq:linear}) are calculated, sorted and stored in advance, we can place them in a matrix form and multiply the matrix with the (sorted) vector \underline{$p$}. The minimum of this product is exactly the solution to the linear approximation problem. Therefore, the practical asympthotic complexity of the approximation algorithm drops to a single multiplication of a ($\log^k(m) \times m$) matrix with a ($m \times 1$) vector.

Even though the complexity of this method is significantly lower than full enumeration, it may still be computationally infeasible as $m$ increases. Therefore, we suggest a simpler (greedy) solution, which is much easier to implement and apply.   

\subsection{Order Algorithm}
\label{ordering solution}
As mentioned above, the minimization problem we are dealing with (\ref{eq:sum_ent_min_binary}) is combinatorial in its essence and is consequently considered hard. We therefore suggest a simplified greedy algorithm which strives to sequentially minimize each term of the summation (\ref{eq:sum_ent_min_binary}), $H_b(Y_j)$, for $j=1,\dots,d$. 

With no loss of generality, let us start by minimizing $H_b(Y_1)$, which corresponds to the marginal entropy of the most significant bit (msb). Since the binary entropy is monotonically increasing in the range $\left[0,\frac{1}{2}\right]$, we would like to find a permutation of $\underline{p}$ that minimizes a sum of half of its values. This means we should order the $p_i$'s so that half of the $p_i$'s with the smallest values are assigned to $P(Y_1)=0$ while the other half of $p_i$'s (with the largest values) are assigned to $P(Y_1)=1$. For example, assuming $m=8$ and $p_1 \leq p_2 \leq \dots \leq p_8$, a permutation which minimizes $H_b(Y_1)$ is 
\begin{table}[!htbp]
\centering
{
\begin{tabular}{|c|c|c|c|c|c|c|c|c|}
\hline
codeword &000 &001 &010 &011&100 &101 &110 &111 \\
\hline
probability &$p_2$ &$p_3$&$p_1$&$p_4$&$p_8$ &$p_5$&$p_6$&$p_7$ \\
\hline
\end{tabular}}
\end{table}

We now proceed to minimize the marginal entropy of the second most significant bit, $H_b(Y_2)$. Again, we would like to assign $P(Y_2)=0$ the smallest possible values of $p_i$'s. However, since the we already determined which $p_i$'s are assigned to the msb, all we can do is reorder the $p_i$'s without changing the msb. This means we again sort the $p_i$'s so that the smallest possible values are assigned to  $P(Y_2)=0$, without changing the msb. In our example, this leads to,
\begin{table}[!htbp]
\centering
{
\begin{tabular}{|c|c|c|c|c|c|c|c|c|}
\hline
codeword &000 &001 &010 &011&100 &101 &110 &111 \\
\hline
probability &$p_2$ &$p_1$&$p_3$&$p_4$&$p_6$ &$p_5$&$p_8$&$p_7$ \\
\hline
\end{tabular}}
\end{table} 

Continuing in the same manner, we would now like to reorder the $p_i$'s to minimize $H_b(Y_3)$ without changing the previous bits. This results with

\begin{table}[!htbp]
\centering
{
\begin{tabular}{|c|c|c|c|c|c|c|c|c|}
\hline
codeword &000 &001 &010 &011&100 &101 &110 &111 \\
\hline
probability &$p_1$ &$p_2$&$p_3$&$p_4$&$p_5$ &$p_6$&$p_7$&$p_8$ \\
\hline
\end{tabular}}
\end{table} 

Therefore, we show that a greedy solution to (\ref{eq:sum_ent_min_binary}) which sequentially minimizes $H_b(Y_j)$ is attained by simply ordering the joint distribution $\underline{p}$ in an ascending (or equivalently descending) order. In other words, the  \textit{order permutation} suggests to  simply order the probability distribution $p_1, \dots, p_m$ in an ascending order, followed by a mapping of the $i^{th}$ symbol (in its binary representation) the $i^{th}$ smallest probability. 

At this point it seems quite unclear how well the order permutation performs, compared both with the relaxed BICA we previously discussed, and the optimal permutation which minimizes  (\ref{eq:sum_ent_min_binary}). In the following sections we introduce some theoretical properties which demonstrate its effectiveness.

\subsection{Worst-case Independent Representation}
\label{worst case}
We now introduce the theoretical properties of our suggested algorithms. Naturally, we would like to quantify how much we ``lose" by representing a given random vector $\underline{X}$ as if its components are statistically independent. Therefore, for any given random vector $\underline{X}  \sim \underline{p} $ and an invertible  transformation $\underline{Y}=g(\underline{X})$, we denote the cost function $C(\underline{p},g)=\sum_{j=1}^d H_b(Y_j) - H(\underline{X})$, as appears in (\ref{eq:sum_ent_min_binary}).

Since both our methods  strongly depend on the given probability distribution $\underline{p}$, we focus on the worst-case and the average case of $C(\underline{p},g)$, with respect to  $\underline{p}$. 
Let us denote the order permutation as $g_{ord}$ and the permutation which is found by the piece-wise linear relaxation as $g_{lin}$. We further define $g_{bst}$ as the permutation that results with a lower value of   $C(\underline{p},g)$, between $g_{lin}$ and $g_{ord}$. This means that $$g_{bst}=\underset{\{g_{lin},g_{ord}\}} {\arg\min}C(\underline{p},g).$$ 
In addition, we define $g_{opt}$ as the optimal permutation that minimizes (\ref{eq:sum_ent_min_binary}) over all possible permutations. Therefore, for any given $\underline{\tilde{p}}$, we have that $C(\underline{\tilde{p}},g_{opt}) \leq C(\underline{\tilde{p}},g_{bst}) \leq C(\underline{\tilde{p}},g_{ord})$. 
In this Section we examine the worst-case performance of both of our suggested algorithms. Specifically, we would like to quantify the maximum of $C(\underline{p},g)$ over all joint probability distributions $\underline{p}$, of a given alphabet size $m$. 

\begin{theorem}
\label{worst-case}
For any  random vector $\underline{X} \sim \underline{p}$, over an alphabet size $m$ we have that
$$\maxmax_{\underline{p}}C(\underline{p},g_{opt})=\Theta(\log(m))$$
\end{theorem}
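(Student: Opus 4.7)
The plan is to prove matching $O(\log m)$ upper and $\Omega(\log m)$ lower bounds on the quantity
\[
C(\underline{p}, g_{opt}) \;=\; \min_{g}\Bigl[\,\sum_{j=1}^{d}H_{b}(Y_{j})-H(\underline{X})\,\Bigr],
\]
where $g$ ranges over permutations of $\{0,1\}^d$ and $m=2^d$.

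The upper bound is immediate. Every binary marginal satisfies $H_{b}(Y_{j})\le 1$, so $\sum_{j=1}^{d}H_{b}(Y_{j})\le d=\log m$ under any permutation $g$; combined with $H(\underline{X})\ge 0$, this yields $C(\underline{p},g)\le\log m$ uniformly in $\underline{p}$ and $g$, hence $\max_{\underline{p}}C(\underline{p},g_{opt})=O(\log m)$.

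For the matching lower bound I would exhibit a single ``bad'' distribution. The candidate is $\underline{p}^{*}$ placing mass $\tfrac{1}{2}$ on one symbol and spreading the remaining mass uniformly, with weight $\tfrac{1}{2(m-1)}$, over the other $m-1$ symbols. The key observation is that the uniform tail is invariant under any permutation of the $m-1$ non-heavy codewords, so every permutation $g$ differs only in which codeword $v\in\{0,1\}^{d}$ carries the heavy atom. A direct count gives
\[
P(Y_{j}=v_{j})\;=\;\tfrac{1}{2}+\tfrac{1}{2(m-1)}\bigl(\tfrac{m}{2}-1\bigr)\;=\;\tfrac{1}{2}+\tfrac{m-2}{4(m-1)},
\]
and since $h_{b}$ is symmetric about $\tfrac{1}{2}$, it follows that $H_{b}(Y_{j})=h_{b}\!\bigl(\tfrac{m}{4(m-1)}\bigr)$ \emph{simultaneously} for every $j$ and every $v$. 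Thus $\sum_{j}H_{b}(Y_{j})$ is literally constant across all $m!$ permutations of $\{0,1\}^d$.

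Evaluating, $\sum_{j}H_{b}(Y_{j})=\bigl(h_{b}(1/4)+o(1)\bigr)\log m$ as $m\to\infty$, while $H(\underline{p}^{*})=h_{b}(1/2)+\tfrac{1}{2}\log(m-1)=1+\tfrac{1}{2}\log m+o(1)$. Subtracting,
\[
C(\underline{p}^{*},g)\;=\;\bigl(h_{b}(1/4)-\tfrac{1}{2}+o(1)\bigr)\log m-1
\]
for \emph{every} $g$, and since $h_{b}(1/4)\approx 0.811>\tfrac{1}{2}$ this forces $C(\underline{p}^{*},g_{opt})=\Omega(\log m)$. Together with the upper bound, $\max_{\underline{p}}C(\underline{p},g_{opt})=\Theta(\log m)$.

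The main obstacle is the lower bound: one must produce a distribution against which even the optimum over all $2^d!$ permutations cannot escape a cost of order $\log m$. My construction bypasses this difficulty by exploiting the symmetry $h_{b}(x)=h_{b}(1-x)$ together with the permutation-invariance of the uniform tail, which collapses the minimization to the single explicit expression above and reduces the proof to a one-line binary-entropy inequality.
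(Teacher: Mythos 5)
Your proof is correct and takes essentially the same approach as the paper: the upper bound $d=\log m$ is immediate, and the lower bound comes from a single distribution with one heavy atom and a uniform tail. The paper uses weights $\tilde p_m=\tfrac{2}{3}$ and $\tilde p_i=\tfrac{1}{3(m-1)}$ (yielding coefficient $h_b(1/6)-\tfrac13$), whereas you use $\tfrac12$ and $\tfrac{1}{2(m-1)}$ (coefficient $h_b(1/4)-\tfrac12$); both are positive, so the choice of constant is cosmetic. Your justification of the inner minimization is in fact a bit cleaner than the paper's: you observe that the uniformity of the tail makes $\sum_j H_b(Y_j)$ \emph{constant} over all $m!$ permutations, so no optimality argument is needed, while the paper instead argues that the order permutation simultaneously minimizes every $P(Y_j=0)$ and is therefore optimal --- an argument that implicitly relies on the same symmetry of $h_b$ about $\tfrac12$ that you invoke explicitly.
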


\begin{proof}
We first notice that $\sum_{j=1}^d H_b(Y_j) \leq d=\log(m)$. In addition, $H(\underline{X}) \geq 0$. Therefore, we have that $C(\underline{p},g_{opt})$ is bounded from above by $\log(m)$. Let us also show that this bound is tight, in the sense that there  exists a joint probability distribution $\underline{\tilde{p}}$ such that  $C(\underline{\tilde{p}},g_{opt})$ is linear in $\log(m)$. 
Let $\tilde{p}_1=\tilde{p}_2=\dots=\tilde{p}_{m-1}=\frac{1}{3(m-1)}$ and $\tilde{p}_m=\frac{2}{3}$.
Then, $\underline{\tilde{p}}$ is ordered and satisfies $P(Y_i=0)=\frac{m}{6(m-1)}$.

In addition, we notice that assigning symbols in a decreasing order to $\underline{\tilde{p}}$ (as mentioned in Section \ref{ordering solution}) results with an optimal permutation. This is simply since  $P(Y_j=0)=\frac{m}{6(m-1)}$ is the minimal possible value of any $P(Y_j=0)$ that can be achieved when summing any $\frac{m}{2}$ elements of $\tilde{p}_i$.
Further we have that, 
\begin{align}
C(\underline{\tilde{p}},g_{opt})=&\sum_{j=1}^d H_b(Y_j) - H(\underline{X})=\\\nonumber
&\log(m)\cdot h_b\left( \frac{m}{6(m-1)}\right) +\left( (m-1) \frac{1}{3(m-1)}\log \frac{1}{3(m-1)} +\frac{2}{3}\log \frac{2}{3} \right)=\\\nonumber
&\log(m)\cdot h_b\left( \frac{m}{6(m-1)}\right)-\frac{1}{3}\log(m-1)+\frac{1}{3}\log\frac{1}{3}+\frac{2}{3}\log\frac{2}{3} \underset{m\rightarrow \infty}{\longrightarrow} \log(m) \cdot \left(h_b\left( \frac{1}{6}\right)-\frac{1}{3}\right)-h_b\left( \frac{1}{3}\right).
\end{align}
Therefore, $\maxmax_{\underline{p}}C(\underline{p},g_{opt})=\Theta(\log(m))$. \begin{flushright}
$\blacksquare$
\end{flushright}
\end{proof}

Theorem \ref{worst-case} shows that even the optimal permutation achieves a sum of marginal entropies which is $\Theta(\log(m))$ bits greater than the joint entropy, in the worst case. This means that there exists at least one source $\underline{X}$ with a joint probability distribution which is impossible to encode as if its components are independent without losing at least $\Theta(\log(m))$ bits. However, we now show that such sources are very ``rare".   

\subsection{Average-case Independent Representation}
\label{average case}
In this section we show that the expected value of  $C(\underline{p},g_{opt})$ is bounded by a small constant, when averaging uniformly over all possible $\underline{p}$ over an alphabet size $m$. 

To prove this, we recall that $C(\underline{p},g_{opt}) \leq C(\underline{p},g_{ord})$ for any given probability distribution $\underline{p}$. Therefore, we would like to find the expectation of $C(\underline{p},g_{ord})$ where the random variables are  $p_1, \dots, p_m$, taking values over a uniform simplex. 

\begin{proposition}
\label{H(X)}
Let $\underline{X} \sim \underline{p} $ be a random vector of an alphabet size $m$ and a joint probability distribution $\underline{p}$. The expected joint entropy of $\underline{X}$, where the expectation is  over a uniform simplex of joint probability distributions $\underline{p}$ is
\begin{equation}\nonumber
\mathbb{E}_{\underline{\smash{p}}}\left\{H(\underline{X}) \right\}=\frac{1}{\log_e{2}}\left(\psi(m+1)-\psi(2)\right) 
\end{equation}
where $\psi$ is the \textit{digamma function}.
\end{proposition}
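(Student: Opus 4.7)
The plan is to recognize that the uniform distribution over the probability simplex is exactly the Dirichlet distribution $\mathrm{Dir}(1,1,\ldots,1)$, so the claim becomes a direct computation of the expected Shannon entropy under a flat Dirichlet prior. I would split this into three steps: a linearity-of-expectation reduction, a one-dimensional Beta-integral, and assembly of the digamma expression.

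First, by linearity of expectation,
\begin{equation*}
\mathbb{E}_{\underline{p}}\!\left\{H(\underline{X})\right\} = -\sum_{i=1}^m \mathbb{E}_{\underline{p}}\left\{p_i \log p_i\right\} = -m \,\mathbb{E}\!\left\{p_1 \log p_1\right\},
\end{equation*}
where the last equality follows from the symmetry of the uniform simplex in the $m$ coordinates. Next, the marginal distribution of a single coordinate of $\mathrm{Dir}(1,\ldots,1)$ is $\mathrm{Beta}(1,m-1)$ with density $f(p) = (m-1)(1-p)^{m-2}$, so the problem reduces to evaluating the scalar integral
\begin{equation*}
\mathbb{E}\!\left\{p_1 \ln p_1\right\} = (m-1)\int_0^1 p\,\ln(p)\,(1-p)^{m-2}\,dp,
\end{equation*}
after converting from $\log$ to $\ln$ via the factor $1/\ln 2 = 1/\log_e 2$.

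The main computational step is the integral $\int_0^1 p\,\ln(p)\,(1-p)^{m-2}\,dp$. I would handle it by the standard differentiation-under-the-integral trick applied to the Beta function: starting from $B(a,b) = \int_0^1 p^{a-1}(1-p)^{b-1}dp = \Gamma(a)\Gamma(b)/\Gamma(a+b)$, differentiating in $a$ brings down exactly a factor of $\ln p$, yielding
\begin{equation*}
\int_0^1 p^{a-1}\ln(p)(1-p)^{b-1}\,dp = B(a,b)\bigl(\psi(a) - \psi(a+b)\bigr).
\end{equation*}
Setting $a=2$, $b=m-1$ gives $B(2,m-1) = 1/(m(m-1))$ and hence
\begin{equation*}
\int_0^1 p\,\ln(p)(1-p)^{m-2}\,dp = \frac{\psi(2) - \psi(m+1)}{m(m-1)}.
\end{equation*}

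Finally, I would assemble the pieces: the previous display times $(m-1)$ gives $\mathbb{E}\{p_1 \ln p_1\} = (\psi(2)-\psi(m+1))/m$, and multiplying by $-m$ and dividing by $\ln 2$ to return to base-two logarithms produces
\begin{equation*}
\mathbb{E}_{\underline{p}}\!\left\{H(\underline{X})\right\} = \frac{1}{\log_e 2}\bigl(\psi(m+1) - \psi(2)\bigr),
\end{equation*}
which is exactly the claim. The only real obstacle is the Beta-integral identity for $\mathbb{E}\{p \ln p\}$; once one recognizes that differentiating $B(a,b)$ in $a$ produces the digamma factor, everything else is bookkeeping.
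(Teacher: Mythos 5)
Your proposal is correct and follows essentially the same route as the paper: both reduce by symmetry to $-m\,\mathbb{E}\{p_1\log p_1\}$, identify the marginal of a coordinate of $\mathrm{Dir}(1,\dots,1)$ as $\mathrm{Beta}(1,m-1)$, and evaluate the resulting moment via a Beta-digamma identity. The only cosmetic differences are that the paper derives the Beta marginal through the normalized-Gamma representation of the Dirichlet and then cites the known formula $\mathbb{E}\{\log_e V\}=\psi(\alpha_1)-\psi(\alpha_1+\alpha_2)$ for $V\sim\mathrm{Beta}(\alpha_1,\alpha_2)$, whereas you state the marginal directly and derive that same identity by differentiating the Beta integral in its first argument.
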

The proof of this proposition is left for the Appendix. 

We now turn to examine the expected sum of the marginal entropies, $\sum_{j=1}^d H_b(Y_j)$ under the order permutation. As described above, the order permutation suggests sorting the probability distribution $p_1, \dots, p_m$ in an ascending order, followed by mapping of the $i^{th}$ symbol (in a binary representation) the $i^{th}$ smallest probability. Let us denote $p_{(1)}\leq \dots\leq p_{(m)}$ the ascending ordered probabilities $p_1, \dots, p_m$. Bairamov et al. \cite{bairamov2010limit} show that the expected value of $p_{(i)}$ is

\begin{equation}
\label{expectation}
\mathbb{E}\left\{p_{(i)}\right\}=\frac{1}{m}\sum_{k=m+1-i}^{m}\frac{1}{k}=\frac{1}{m}\left(K_m-K_{m-i}\right)
\end{equation}

where $K_m=\sum_{k=1}^{m}\frac{1}{k}$ is the Harmonic number. Denote the ascending ordered binary representation of all possible symbols in a matrix form $A \in \{0,1\}^{(m\times d)}$. This means that entry $A_{ij}$ corresponds to the $j^{th}$ bit in the $i^{th}$ symbol, when the symbols are given in an ascending order. Therefore, the expected sum of the marginal entropies of $\underline{Y}$, when the expectation is over a uniform simplex of joint probability distributions $p$, follows

\begin{equation}
\label{sum_of_marginals}
\mathbb{E}_{\underline{\smash{p}}}\left\{\sum_{j=1}^d H_b(Y_j) \right\} \underset{(a)}{\leq} \sum_{j=1}^d h_b(\mathbb{E}_{\underline{\smash{p}}}\{Y_j\}) \underset{(b)}= \sum_{j=1}^d  h_b \left(\frac{1}{m}\sum_{i=1}^{m}A_{ij}  \left(K_m-K_{m-i}\right)  \right) \underset{(c)}=\sum_{j=1}^d   h_b \left( \frac{1}{2}K_m- \frac{1}{m}\sum_{i=1}^{m}A_{ij}K_{m-i} \right)
\end{equation}
where $(a)$ follows from Jensen's inequality, $(b)$ follows from (\ref{expectation}) and $(c)$ follows since $\sum_{i=1}^{m}A_{ij}=\frac{1}{2}$ for all $j=1, \dots,d$.

We now turn to derive asymptotic bounds of the expected difference between the sum of $\underline{Y}$'s marginal entropies and the joint entropy of $\underline{X}$, as appears in (\ref{eq:sum_ent_min_binary}).

\begin{theorem}
\label{average case - asymp}
Let $\underline{X} \sim \underline{p}$ be a random vector of an alphabet size $m$ and joint probability distribution $\underline{p}$. Let $\underline{Y}=g_{ord}(\underline{X})$ be the order permutation. 
For $d \geq 10$, the expected value of $C(\underline{p},g_{ord})$, over a uniform simplex of joint probability distributions $\underline{p}$, satisfies
\begin{equation}
\nonumber
\mathbb{E}_{\underline{\smash{p}}}C(\underline{p},g_{ord})= \mathbb{E}_{\underline{\smash{p}}}\left\{\sum_{j=1}^d H_b(Y_j)- H(\underline{X}) \right\} < 0.0162 +O\left(\frac{1}{m}\right)
\end{equation}
\end{theorem}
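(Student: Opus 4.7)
The proof is a purely analytic exercise once we combine the Jensen upper bound in (\ref{sum_of_marginals}) with Proposition \ref{H(X)}. Using the integer identity $\psi(m+1)-\psi(2)=K_m-1$ (where $K_m=\sum_{k=1}^{m}1/k$), it suffices to show
$$B(m)\;:=\;\sum_{j=1}^{d} h_b(\mu_j)\;-\;(K_m-1)\log_2 e\;<\;0.0162+O(1/m),\qquad d\geq10,$$
where $\mu_j:=\tfrac12 K_m-\tfrac1m\sum_{i=1}^{m}A_{ij}K_{m-i}$ is the marginal expectation $\mathbb{E}_{\underline{p}}\{Y_j\}$ appearing in (\ref{sum_of_marginals}). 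No further probabilistic machinery is required.

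The plan is to extract a clean closed form for $\mu_j$ by exploiting the periodic block structure of the $j$-th column of $A$: the set $\{i:A_{ij}=1\}$ is a disjoint union of $2^{j-1}$ contiguous blocks of length $L_j=m/2^j$. Summing $K_{m-i}$ over each block using the identity $\sum_{k=1}^{n}K_k=(n+1)K_n-n$ expresses $\mu_j$ as a short linear combination of the values $K_{kL_j}$ for $k=1,\ldots,2^j$. Plugging in $K_n=\ln n+\gamma+O(1/n)$ and collecting terms yields $\mu_j=\mu_j^{\infty}+O(1/m)$ with
$$\mu_j^{\infty}\;:=\;\sum_{k\text{ odd}}\int_{k/2^j}^{(k+1)/2^j}-\ln(1-u)\,du,$$
the continuous limit obtained by replacing the scaled order statistic $m\,p_{(i)}$ with the exponential quantile function $-\ln(1-u)$. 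Every $\mu_j^{\infty}$ is available in closed form through the antiderivative $F(u)=(1-u)\ln(1-u)-(1-u)$.

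Substituting back and using $(K_m-1)\log_2 e=\log_2 m+(\gamma-1)\log_2 e+O(1/m)$ together with $d=\log_2 m$, the $\log_2 m$ terms cancel and
$$B(m)\;=\;(1-\gamma)\log_2 e\;-\;\sum_{j=1}^{d}\bigl(1-h_b(\mu_j^{\infty})\bigr)+O(1/m).$$
Each summand $1-h_b(\mu_j^{\infty})$ is nonnegative, so $B(m)$ is nonincreasing in $d$ and it is enough to verify the inequality at $d=10$. Numerically, $(1-\gamma)\log_2 e\approx0.6100$, and tabulating the first ten $\mu_j^{\infty}$ (starting from $\mu_1^{\infty}=\tfrac12+\tfrac12\ln 2$) gives $\sum_{j=1}^{10}(1-h_b(\mu_j^{\infty}))>0.594$, which is precisely the content of the claim with constant $0.0162$. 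The tail is negligible: $\mu_j^{\infty}-\tfrac12=\Theta(2^{-j})$, dominated by the top block $[1-2^{-j},1]$ which contributes a gap of order $(\ln 4)/2^j$, and hence $1-h_b(\mu_j^{\infty})=\Theta(4^{-j})$.

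The main obstacle I foresee is making the $O(1/m)$ remainder in $\mu_j=\mu_j^{\infty}+O(1/m)$ uniform in $j\le d$: for $j$ close to $d$ the block length $L_j$ is as small as $1$, so the naive use of $K_n=\ln n+\gamma+O(1/n)$ leaves a relative error of order $2^j/m$ rather than $1/m$. The cure is to split the range of $j$. For small $j$ (say $j\le d/2$) the expansion above is safe. For large $j$ one bypasses the expansion altogether by observing that both $\mu_j$ and $\mu_j^{\infty}$ lie in a shrinking neighborhood of $\tfrac12$ where $h_b$ is Lipschitz, so each contribution $1-h_b(\mu_j)$ to $B(m)$ is already $\Theta(4^{-j})$ and the corresponding tail of the series is trivially bounded. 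Once this uniformity is in hand, the numerical verification of the constant $0.0162$ is routine.
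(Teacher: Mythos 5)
Your proposal tracks the paper's proof in all its essential moves: Jensen's inequality via (\ref{sum_of_marginals}), Proposition~\ref{H(X)} for the expected joint entropy, the harmonic--number expansion of the block sums, truncation at $j=10$, and a numerical evaluation that recovers the constant. The one genuinely pleasant rewrite is that you package the paper's alternating sum $\sum_{i=1}^{2^j-1}(-1)^{i+1}\frac{i}{2^j}\log_e\frac{i}{2^j}+\tfrac12$ as the integral $\mu_j^\infty=\sum_{k\ \mathrm{odd}}\int_{k/2^j}^{(k+1)/2^j}(-\ln(1-u))\,du$, i.e.\ as an integral of the exponential quantile over the odd blocks; the two forms are equal (up to the symmetric substitution $p\leftrightarrow 1-p$, which $h_b$ ignores), and the antiderivative $F(u)=(1-u)\ln(1-u)-(1-u)$ makes the tabulation more transparent. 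Your monotonicity observation ``add a nonnegative term $1-h_b(\mu_j^\infty)$ for each extra bit'' is exactly the paper's ``bound $\mathbb{E}\{H_b(Y_j)\}$ by $1$ for $j>10$,'' just phrased on the other side of the inequality.

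Two remarks on precision. First, the ``main obstacle'' you flag --- non-uniformity of the $O(1/m)$ remainder across $j$ --- is real if one insists on writing $B(m)=(1-\gamma)\log_2 e-\sum_{j=1}^{d}(1-h_b(\mu_j^\infty))+O(1/m)$ with the sum running all the way to $d=\log_2 m$, but it is entirely avoidable: as the paper does, apply the asymptotic $K_n=\ln n+\gamma+O(1/n)$ only for the fixed set $j\le 10$ (where $m/2^j\ge m/1024$, so each remainder is honestly $O(1/m)$) and use the trivial bound $h_b\le 1$ for $j>10$. Your proposed split at $j\le d/2$ plus a Lipschitz estimate near $\tfrac12$ is a correct but heavier cure for a problem that the truncation already removes. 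Second, the tail claim $\mu_j^\infty-\tfrac12=\Theta(2^{-j})$ ``dominated by the top block'' is slightly off: integrating the square wave against $f'(u)=1/(1-u)$ shows the contributions accumulate across \emph{all} block pairs and give $\mu_j^\infty-\tfrac12=\Theta(j\,2^{-j})$, hence $1-h_b(\mu_j^\infty)=\Theta(j^2\,4^{-j})$. The series still converges geometrically, so nothing in the argument breaks, but the exponent you state is not the right one.
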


\begin{proof}

Let us first derive the expected marginal entropy of the least significant bit, $j=1$, according to (\ref{sum_of_marginals}).

\begin{align}
\label{LSB}
\mathbb{E}_{\underline{\smash{p}}}\left\{ H_b(Y_1) \right\} \leq &  h_b \left( \frac{1}{2}K_m- \frac{1}{m}\sum_{i=1}^{m/2}K_{m-i} \right)=h_b \left( \frac{1}{2}K_m- \frac{1}{m}\left(\sum_{i=1}^{m-1}K_{i}-\sum_{i=1}^{\frac{m}{2}-1}K_{i} \right)\right)\underset{(a)}{=}\\\nonumber
&h_b \left( \frac{1}{2}K_m- \frac{1}{m}\left(mK_{m}-m-\frac{m}{2}K_{\frac{m}{2}}+\frac{m}{2} \right)\right)=h_b \left( \frac{1}{2}\left(K_{\frac{m}{2}}-K_{m}+1 \right)\right)\underset{(b)}{<}\\\nonumber
&h_b \left( \frac{1}{2}\log_{e}\left(\frac{1}{2}\right)+\frac{1}{2}+O\left(\frac{1}{m}\right)\right)\underset{(c)}{\leq} h_b \left(\frac{1}{2}\log_{e}\left(\frac{1}{2}\right)+\frac{1}{2}\right)+O\left(\frac{1}{m}\right) h_b' \left(\frac{1}{2}\log_{e}\left(\frac{1}{2}\right)+\frac{1}{2}\right)=\\\nonumber
&h_b \left(\frac{1}{2}\log_{e}\left(\frac{1}{2}\right)+\frac{1}{2}\right)+O\left(\frac{1}{m}\right)
\end{align}
 where $(a)$ and (b) follow the harmonic number properties:
\begin{enumerate} [(a)]

\item	$\sum_{i=1}^{m}K_{i}=(m+1)K_{m+1}-(m+1)$
\item	$\frac{1}{2(m+1)} < K_m-\log_e(m)-\gamma<\frac{1}{2m}$, where $\gamma$ is the Euler-Mascheroni constant \cite{young199175}
\end{enumerate}
and $(c)$ results from the concavity of the binary entropy.

Repeating the same derivation for different values of $j$, we attain

\begin{align}
\label{all_bits}
\mathbb{E}_{\underline{\smash{p}}}\left\{ H_b(Y_j) \right\} \leq &  h_b \left( \frac{1}{2}K_m-\frac{1}{m} \sum_{l=1}^{2^j-1} (-1)^{l+1}\sum_{i=1}^{l \frac{m}{2^j}}K_{m-i} \right)= h_b \left( \frac{1}{2}K_m-\frac{1}{m}\sum_{l=1}^{2^j} (-1)^{l}\sum_{i=1}^{l \frac{m}{2^j}-1}K_{i} \right)=\\\nonumber
& h_b \left( \frac{1}{2}K_m-\frac{1}{m}\sum_{l=1}^{2^j} (-1)^{l}\left(l  \frac{m}{2^j}K_{l  \frac{m}{2^j}}-l  \frac{m}{2^j}  \right) \right)<\\\nonumber
&  h_b \left( \sum_{i=1}^{2^j-1}(-1)^{i+1}\frac{i}{2^j}\log_{e}\left(\frac{i}{2^j}\right)+\frac{1}{2}\right)+O\left(\frac{1}{m} \right)\quad\quad \forall j=1,\dots,d.
\end{align}

We may now evaluate the sum of expected marginal entropies of $\underline{Y}$.
For simplicity of derivation let us obtain $\mathbb{E}_{\underline{\smash{p}}}\left\{ H_b(Y_j) \right\}$ for $j=1, \dots, 10$ according to (\ref{all_bits}) and upper bound $\mathbb{E}_{\underline{\smash{p}}}\left\{ H_b(Y_j) \right\}$ for $j>10$ with $h_b\left(\frac{1}{2}\right)=1$.  This means that for $d \geq 10$ we have 

\begin{equation}
 \mathbb{E}_{\underline{\smash{p}}}\left\{\sum_{j=1}^d H_b(Y_j)\right\} <
\sum_{j=1}^{10} \mathbb{E}_{\underline{\smash{p}}}\left(H_b\left\{Y_j\right\}\right) +\sum_{j=11}^d h_b\left(\frac{1}{2}\right)< 9.4063 + (d-10)+O\left(\frac{1}{m} \right).
\end{equation}

The expected joint entropy may also be expressed in a more compact manner. In Proposition \ref{H(X)} it is shown than $\mathbb{E}_{\underline{\smash{p}}}\left\{H(\underline{X}) \right\}=\frac{1}{\log_e{2}}\left(\psi(m+1)-\psi(2)\right) $. Following the inequality in \cite{young199175}, the Digamma function, $\psi(m+1)$, is bounded from below by $\psi(m+1)=H_m-\gamma>\log_e(m)+\frac{1}{2(m+1)}$. Therefore, we conclude that for $d\geq10$ we have that  

\begin{equation}
 \mathbb{E}_{\underline{\smash{p}}}\left\{\sum_{j=1}^d H_b(Y_j)-H(\underline{X})\right\} < 9.4063 + (d-10)-\log{(m)} + \frac{\psi(2)}{\log_e{2}} +O\left(\frac{1}{m}\right) = 0.0162 +O\left(\frac{1}{m}\right)
\end{equation}
\begin{flushright}
$\blacksquare$
\end{flushright}
\end{proof}

In addition, we would like to evaluate the expected difference between the sum of marginal entropies and the joint entropy of $\underline{X}$, that is, without applying any permutation. This shall serve us as a reference  to the upper bound we achieve in Theorem \ref{average case - asymp}. 

\begin{theorem}
\label{theorem4}
Let $\underline{X} \sim \underline{p}$ be a random vector of an alphabet size $m$ and joint probability distribution $\underline{p}$. 
The expected difference between the sum of marginal entropies and the joint entropy of $\underline{X}$, when the expectation is taken over a uniform simplex of joint probability distributions $\underline{p}$, satisfies
\begin{equation}
\nonumber
 \mathbb{E}_{\underline{\smash{p}}}\left\{\sum_{j=1}^d H_b(X_j)- H(\underline{X}) \right\} < \frac{\psi(2)}{\log_e{2}}=0.6099
\end{equation}
\end{theorem}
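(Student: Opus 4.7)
The plan is to mirror the structure of the proof of Theorem \ref{average case - asymp}, but to exploit the fact that, without an ordering or any other permutation, the random vector remains exchangeable under the uniform Dirichlet prior, which trivializes the upper bound on the marginal-entropy side. Concretely, I would first apply Jensen's inequality to the concave binary entropy function, giving
\begin{equation*}
\mathbb{E}_{\underline{p}}\{H_b(X_j)\} \leq h_b\!\left(\mathbb{E}_{\underline{p}}\{P(X_j=0)\}\right),\qquad j=1,\dots,d.
\end{equation*}
Under the uniform simplex, each coordinate $p_i$ has marginal expectation $1/m$, and since exactly $m/2$ symbols have $j$-th bit equal to zero, $\mathbb{E}_{\underline{p}}\{P(X_j=0)\}=1/2$ for every $j$. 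Hence $\mathbb{E}_{\underline{p}}\{H_b(X_j)\}\le h_b(1/2)=1$ and $\mathbb{E}_{\underline{p}}\{\sum_{j=1}^{d}H_b(X_j)\}\le d=\log_2 m$.

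Next I would invoke Proposition \ref{H(X)}, together with the identities $\psi(m+1)=K_m-\gamma$ and $\psi(2)=1-\gamma$, to rewrite the expected joint entropy as
\begin{equation*}
\mathbb{E}_{\underline{p}}\{H(\underline{X})\}=\frac{\psi(m+1)-\psi(2)}{\log_e 2}=\frac{K_m-1}{\log_e 2}.
\end{equation*}
The lower bound $K_m-\log_e m-\gamma>\tfrac{1}{2(m+1)}>0$ of \cite{young199175}, already used in the proof of Theorem \ref{average case - asymp}, then yields
\begin{equation*}
\mathbb{E}_{\underline{p}}\{H(\underline{X})\}>\frac{\log_e m+\gamma-1}{\log_e 2}=\log_2 m-\frac{1-\gamma}{\log_e 2}=\log_2 m-\frac{\psi(2)}{\log_e 2}.
\end{equation*}

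Finally, combining the two bounds gives
\begin{equation*}
\mathbb{E}_{\underline{p}}\!\left\{\sum_{j=1}^{d}H_b(X_j)-H(\underline{X})\right\}\le \log_2 m-\mathbb{E}_{\underline{p}}\{H(\underline{X})\}<\frac{\psi(2)}{\log_e 2}=0.6099,
\end{equation*}
which is the claimed estimate. There is really no obstacle here, only a conceptual observation: unlike Theorem \ref{average case - asymp}, where applying the order permutation forced the expected bit-marginals away from $1/2$ (and required a careful harmonic-number computation via the order-statistic formula (\ref{expectation})), in the present un-permuted setting exchangeability pins $\mathbb{E}_{\underline{p}}\{P(X_j=0)\}$ to $1/2$ exactly and the whole argument collapses into a one-line application of Jensen plus the closed form from Proposition \ref{H(X)}.
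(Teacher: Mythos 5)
Your proposal is correct and follows essentially the same route as the paper's own (terse) proof: bound $\mathbb{E}_{\underline{p}}\{\sum_j H_b(X_j)\}$ by $d=\log_2 m$ using Jensen and the exchangeability fact $\mathbb{E}_{\underline{p}}\{P(X_j=0)\}=\tfrac12$, then lower-bound $\mathbb{E}_{\underline{p}}\{H(\underline{X})\}$ via Proposition~\ref{H(X)} and the harmonic-number/digamma inequality from \cite{young199175}, and subtract. The only difference is expository — you spell out the Jensen step and the identities $\psi(m+1)=K_m-\gamma$, $\psi(2)=1-\gamma$ that the paper leaves implicit.
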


\begin{proof}
We first notice that $P\left(X_j=1\right)$ equals the sum of one half of the probabilities $p_i, i=1, \dots, m$ for every $j=1 \dots d$. Assume $p_i$'s are randomly (and uniformly) assigned to each of the $m$ symbols. Then, $\mathbb{E}\{P\left(X_j=1\right)\}=\frac{1}{2}$ for every $j=1 \dots d$. Hence,

\begin{align}
\nonumber
 \mathbb{E}_{\underline{\smash{p}}}\left\{\sum_{j=1}^d H_b(X_j)- H(\underline{X}) \right\} =& 
\sum_{j=1}^d \mathbb{E}_{\underline{\smash{p}}}\left\{H_b(X_j)\right\}- \mathbb{E}_{\underline{\smash{p}}}\{H(\underline{X})\} <
d-\log{(m)}+\frac{1}{\log_e{2}} \left( \psi(2)-\frac{1}{2(m+1)}\right)<\frac{\psi(2)}{\log_e{2}}
\end{align}
\begin{flushright}
$\blacksquare$
\end{flushright}
\end{proof}

To conclude, we show that for a random vector $\underline{X}$ over an alphabet size $m$, we have 

$$ \mathbb{E}_{\underline{\smash{p}}}C(\underline{p},g_{opt}) \leq \mathbb{E}_{\underline{\smash{p}}}C(\underline{p},g_{bst}) \leq \mathbb{E}_{\underline{\smash{p}}}C(\underline{p},g_{ord}) < 0.0162+O\left(\frac{1}{m}\right)$$

for $d\geq10$, where the expectation is over a uniform simplex of joint probability distributions $\underline{p}$. 

This means that when the alphabet size is large enough, even the simple order permutation achieves, on the average, a sum of marginal entropies which is only $0.0162$ bits greater than the joint entropy, when all possible probability distributions $\underline{p}$ are equally likely to appear. Moreover, we show that the simple order permutation reduced the expected difference between the sum of the marginal entropies and the joint entropy of $\underline{X}$ by more than half a bit, for sufficiently large $m$.

\section{Large Alphabet Source Coding}
\label{classic source coding}
Assume a classic compression setup in which both the encoder and the decoder are familiar with the joint probability distribution of the source $\underline{X} \sim \underline{p}$, and the number of observations $n$ is sufficiently large in the sense that $\hat{H}(\underline{X})\approx H(\underline{X})$. 

As discussed above, both Huffman and arithmetic coding entail a growing redundancy and a quite involved implementation as the alphabet size increases. 
The Huffman code guarantees a redundancy of at most a single bit for every alphabet size, depending on the dyadic structure of $p$. On the other hand, arithmetic coding does not require a dyadic $p$, but only guarantees a redundancy of up to two bits, and is practically limited for smaller alphabet size \cite{cover2012elements,yang2000universal}.

In other words, both Huffman and arithmetic coding are quite likely to have an average codeword length which is greater than $H(\underline{X})$, and are complicated (or sometimes even impossible) to implement, as $m$ increases.

To overcome these drawbacks, we suggest a simple solution in which we first apply an invertible transformation to make the components of $\underline{X}$ ``as statistically independent as possible", following an entropy coding on each of its components separately. This scheme results with a redundancy which we previously defined as $C(\underline{p},g)=\sum_{j=1}^m H(Y_j) -H(\underline{X})$. However, it allows us to apply a Huffman or arithmetic encoding on each of the components separately; hence, over a binary alphabet. 

Moreover, notice we can group several components, $Y_j$, into blocks so that the joint entropy of the block is necessarily lower than the sum of marginal entropies of $Y_j$. Specifically, denote $b$ as the number of components in each block and $B$ as the number of blocks. Then, $b \times B =d$ and for each block $v =1, \dots, B$ we have that
\begin{equation}
\label{block inequality}
H(\underline{Y}^{(v)}) \leq \sum_{u=1}^b H_b(Y_u^{(v)})
\end{equation}
where $H(\underline{Y}^{(v)})$ is the entropy of the block $v$ and $H_b(Y_u^{(v)})$ is the marginal entropy of the $u^{th}$ component of the block $v$. Summing over all $B$ blocks we have 
\begin{equation}
\label{total inequality}
\sum_{v=1}^B H(\underline{Y}^{(v)}) \leq \sum_{v=1}^B\sum_{u=1}^b H_b(Y_u^{(v)})=\sum_{j=1}^d H_b(Y_j).
\end{equation}
This means we can always apply our suggested invertible transformation which minimizes 
$\sum_{j=1}^d H_b(Y_j)$,  and then the group components into $B$ blocks and encode each block separately. This results with  $\sum_{v=1}^B H(\underline{Y}^{(v)}) \leq \sum_{j=1}^d H_b(Y_j)$. By doing so, we increase the alphabet size of each block (to a point which is still not problematic to implement with Huffman or arithmetic coding) while at the same time we decrease the redundancy. We discuss different considerations in choosing the number of blocks $B$ in the following sections.

A more direct approach of minimizing the sum of block entropies $\sum_{v=1}^B H(\underline{Y}^{(v)})$ is to refer to each block as a symbol over a greater alphabet size, $2^b$. This allows us to seek an invertible transformation which minimizes the sum of marginal entropies, where each marginal entropy corresponds to a marginal probability distribution over an alphabet size $2^b$. This minimization problem is referred to as a generalized ICA over finite alphabets and is discussed in detail in \cite{painsky2015generalized}.

However,  notice that both the Piece-wise Linear Relaxation algorithm (Section \ref{piece-wise solution}), and the solutions discussed in \cite{painsky2015generalized}, require an extensive computational effort in finding a minimizer for (\ref{eq:sum_ent_min_binary}) as the alphabet size increases. Therefore, we suggest applying the greedy order permutation as $m$ grows. This solution may result in quite a large redundancy for a several joint probability distributions $\underline{p}$ (as shown in Section \ref{worst case}). However, as we uniformly average over all possible $p$'s, the redundancy is bounded with a small constant as the alphabet size increases (Section  \ref{average case}).

Moreover, the ordering approach simply requires ordering the values of  $\underline{p}$, which is significantly faster than constructing a Huffman dictionary or arithmetic encoder.

To illustrate our suggested scheme, consider a source $\underline{X} \sim \underline{p}$ over an alphabet size $m$, which follows the Zipf's law distribution, 
\begin{equation}\nonumber
P(k;s,m)=\frac{k^{-s}}{\sum_{l=1}^m l^{-s}}
\end{equation}
where $m$ is the alphabet size and $s$ is the ``skewness" parameter. The Zipf's law distribution is a commonly used heavy-tailed distribution, mostly in modeling of natural (real-world) quantities. It is widely used in physical and social sciences, linguistics, economics and many other fields.   

We would like to design an entropy code for $\underline{X}$ with $m=2^{16}$ and different values of $s$.
We first apply a standard Huffman code as an example of a common entropy coding scheme. Notice that we are not able to construct an arithmetic encoder as the alphabet size is too large \cite{yang2000universal}. We further apply our suggested order permutation scheme (Section \ref{ordering solution}), in which we sort $\underline{p}$ in a descending order, followed by arithmetic encoding to each of the components separately.  We further group these components into two separate blocks (as discussed above) and apply an arithmetic encoder on each of the blocks.     
We repeat this experiment for a range of parameter values $s$. Figure \ref{fig:classic_compression} demonstrates the results we achieve.

\begin{figure}[ht]
\centering
\includegraphics[width = 0.7\textwidth,bb= 60 235 540 545,clip]{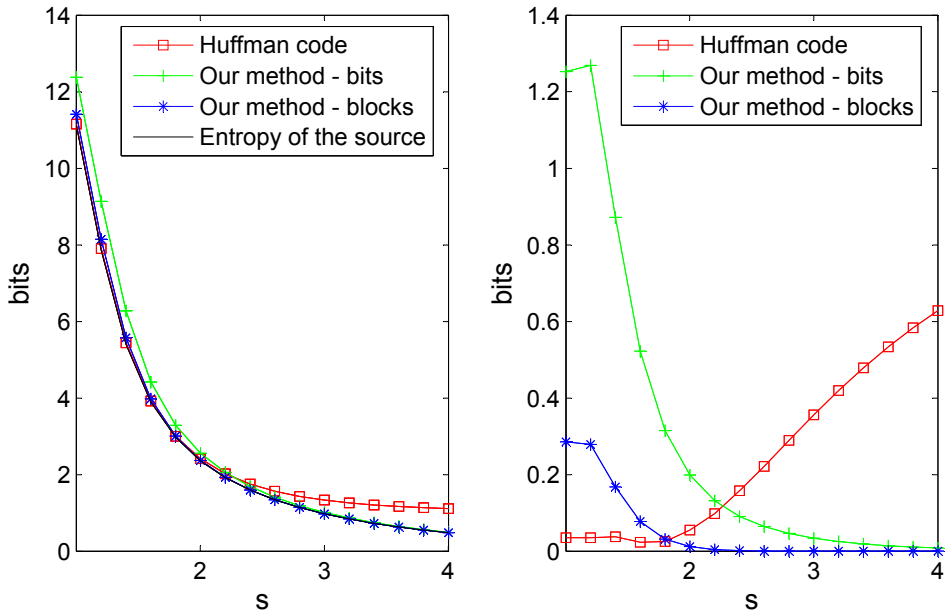}
\caption{Zipf's law simulation results. Left: the curve with the squares is the average codeword length using a Huffman code, the curve with the crosses corresponds to the average codeword length using our suggested methods when encoding each component separately, and the curve with the asterisks is our suggested method when encoding each of the two blocks separately. The black curve (which tightly lower-bounds all the curves) is the entropy of the source. Right: The difference between each encoding method and the entropy of the source}
\label{fig:classic_compression}
\end{figure}

Our results show that the Huffman code attains an average codeword length which is very close to the entropy of the source for lower values of $s$. However, as $s$ increases and the distribution of the source becomes more skewed, the Huffman code diverges from the entropy. On the other hand, our suggested method succeeds in attaining an average codeword length which is very close to the entropy of the source for every $s$, especially as $s$ increases and when independently encoding each of the blocks.

\section{Universal Source Coding}
\label{universal source coding}
The classical source coding problem is typically concerned with a source whose alphabet size is much smaller than the length of the sequence. In this case one usually assumes that $\hat{H}(\underline{X}) \approx H(\underline{X})$.   
However, in many real world applications such an assumption is not valid. A paradigmatic example is the word-wise compression of natural language texts.
In this setup we draw a memoryless sequence of words, so that the alphabet size is often comparable to or even larger than the length of the source sequence. 

As discussed above, the main challenge in large alphabet source coding is the redundancy of the code, which is formally defined as the excess number of bits used over the source's entropy. The redundancy may be quantified as the expected number of extra bits required to code a memoryless sequence drawn from $\underline{X} \sim \underline{p}$, when using a code that was constructed for $\underline{p}$, rather than using the ``true" code, optimized for the empirical distribution $\hat{\underline{p}}$.  
Another way to quantify these extra bits is to directly design a code for $\hat{\underline{p}}$, and transmit the encoded sequence together with this code.

Here again, we claim that in some cases, applying a transformation which decomposes the observed sequence into multiple ``as independent as possible" components results in a better compression rate.

However, notice that now we also need to consider the number of bits required to describe the transformation. In other words, our redundancy involves not only (\ref{eq:min_criterion}) and the designated code for the observed sequence, but also the invertible transformation we applied on the sequence. This means that even the simple order permutation (Section \ref{ordering solution}) requires at most $n\log{m}$ bits to describe, where $m$ is the alphabet size and $n$ is the length of the sequence. This redundancy alone is not competitive with  Szpankowski and Weinberger \cite{szpankowski2012minimax} worst-case redundancy results, described in (\ref{n=o(m)}).

Therefore, we require a different approach which minimizes the sum of marginal entropies (\ref{eq:sum_ent_min_binary}) but at the same time is simpler to describe.

One possible solution is to seek invertible, yet linear, transformations. This means that describing the transformation would now only require $\log^2{m}$ bits. However, this generalized linear BICA problem is also quite involved. In their work, Attux et al. \cite{attux2011immune} describe the difficulties in minimizing (\ref{eq:sum_ent_min_binary}) over XOR field (linear transformations) and suggest an immune-inspired algorithm for it. Their algorithm, which is heuristic in its essence, demonstrates some promising results. However, it is not very scalable (with respect to the number of components $\log{m}$) and does not guarantee to converge to the global optimal solution.

Therefore, we would like to modify our suggested approach (Section \ref{piece-wise solution}) so that the transformation we achieve requires fewer bits to describe.


As in the previous section, we argue that in some setups it is better to split the components of the data into blocks, with $b$ components in each block, and encode the blocks separately. Notice that we may set the value of $b$ so that the blocks are no longer considered as over a large alphabet size $(n \gg 2^b)$. This way, the redundancy of encoding each block separately is again negligible, at the cost of longer averaged codeword length. For simplicity of notation we define the number of blocks as $B$, and assume $B=\sfrac{d}{b}$ is a natural number. Therefore, encoding the $d$ components all together takes  $n\cdot \hat{H}(\underline{X})$ bits for the data itself, plus a redundancy term according to (\ref{m=o(n)}) and (\ref{n=o(m)}), while the block-wise compression takes about
\begin{equation}
\label{compression_blocks}
n\cdot \sum_{v=1}^{B}{ \hat{H}({\underline{X}}^{(v)})}+B\frac{2^b-1}{2}\log{\frac{n}{2^b}}
\end{equation}
bits, where the first term is $n$ times the sum of $B$ empirical block entropies and the second term is $B$ times the redundancy of each block when $n=o(2^b)$.
Two subsequent questions arise from this setup:
\begin{enumerate}

\item	What is the optimal value of $b$ that minimizes (\ref{compression_blocks})?
\item	Given a fixed value of $b$, how can we re-arrange $d$ components into $B$ blocks so that the averaged codeword length (which is bounded from below by the empirical entropy), together with the redundancy, is as small as possible?

\end{enumerate}
Let us start by fixing $b$ and focusing on the second question. 

A naive shuffling approach is to exhaustively or randomly search for all possible combinations of clustering $d$ components into $B$ blocks. Assuming $d$ is quite large, an exhaustive search is practically infeasible. Moreover, the shuffling search space is quite limited and results with a very large value of (\ref{eq:min_criterion}), as shown below. Therefore, a different method is required.
We suggest applying our generalized BICA tool as an upper-bound search method for efficiently searching for a minimal possible averaged codeword length.
As in previous sections we define $\underline{Y}=g(\underline{X})$, where $g$ is some invertible transformation of $\underline{X}$.
Every block of the vector $\underline{Y}$ satisfies (\ref{block inequality}), where the entropy terms are now replaced with empirical entropies.  In the same manner as in Section \ref{classic source coding}, summing over all $B$ blocks results with  (\ref{total inequality}) where again, the entropy terms are replaced with empirical entropies.
This means that the sum of the empirical block entropies is  bounded from above by the empirical marginal entropies of the components of $\underline{Y}$ (with equality iff the components are independently distributed).

\begin{equation}
\label{universal total inequality}
\sum_{v=1}^B \hat{H}(\underline{Y}^{(v)}) \leq \sum_{j=1}^d \hat{H}_b(Y_j).
\end{equation}

Our suggested scheme works as follows: 
We first randomly partition the $d$ components into $B$ blocks. We estimate the joint probability of each block and apply the generalized BICA on it. The sum of empirical marginal entropies (of each block) is an upper bound on the empirical entropy of each block, as described in the previous paragraph. Now, let us randomly shuffle the $d$ components of the vector $\underline{Y}$. By ``shuffle" we refer to an exchange of positions of $\underline{Y}$'s components. Notice that by doing so, the sum of empirical marginal entropies of the entire vector $\sum_{i=1}^{d}{\hat{H}_b(Y_i)}$ is maintained. We now apply the generalized BICA on each of the (new) blocks. This way we minimize (or at least do not increase) the sum of empirical marginal entropies of the (new) blocks.  This obviously results with a lower sum of empirical marginal entropies of the entire vector $\underline{Y}$. It also means that we minimize the left hand side of (\ref{universal total inequality}), which upper bounds the sum of empirical block entropies, as the inequality in (\ref{universal total inequality}) suggests. In other words, we show that in each iteration we decrease (at least do not increase) an upper bound on our objective. We terminate once a maximal number of iterations is reached or we can no longer decrease the sum of empirical marginal entropies.

Therefore, assuming we terminate at iteration $I_0$, encoding the data takes about 
\begin{align}
\label{compression_total}
   n\cdot \sum_{v=1}^{B}{\hat{H}^{[I_0]}({\underline{Y}}^{(v)})}+B\frac{2^b-1}{2}\log{\frac{n}{2^b}}+ I_0B\cdot &b2^b+I_0d\log{d}
\end{align}
bits, where the first term refers to the sum of empirical block entropies at the $I_0$ iteration, the third term refers to the representation of $I_0 \cdot B$ invertible transformation of each block during the process until $I_0$, and the fourth term refers to the bit permutations at the beginning of each iteration.

Hence, to minimize (\ref{compression_total}) we need to find the optimal trade-off between a low value of $\sum_{v=1}^{B}{\hat{H}^{[I_0]}({\underline{Y}}^{(v})}$ and a low iteration number $I_0$. 
We may apply this technique with different values of $b$ to find the best compression scheme over all block sizes.

\subsection{synthetic experiments}

In order to demonstrate our suggested method we first generate a dataset according to the Zipf law distribution which was previously described.  We draw $n=10^6$ realizations from this distribution with an alphabet size $m=2^{20}$ and a parameter value $s=1.2$. We encounter $n_0=80,071$ unique words and attain an empirical entropy of $8.38$ bits (while the true entropy is $8.65$ bits). Therefore, compressing the drawn realizations in its given $2^{20}$ alphabet size takes a total of about $10^{6}\times 8.38+1.22\times10^6=9.6 \cdot 10^6$ bits, according to (\ref{m=theta(n)}). 
Using the patterns method \cite{orlitsky2004universal}, the redundancy we achieve is the redundancy of the pattern plus the size of the dictionary. Hence, the compressed size of the data set according to this method is lower bounded by $10^{6}\times8.38+80,071\times20+100=9.982\cdot10^6$ bits.
In addition to these asymptotic schemes we would also like to compare our method with a common practical approach. For this purpose we apply the canonical version of the Huffman code. Through the canonical Huffman code we are able to achieve a compression rate of $9.17$ bits per symbol, leading to a total compression size of about $1.21\cdot10^7$ bits.   

Let us now apply a block-wise compression. 
We first demonstrate the behavior of our suggest approach with four blocks $(B=4)$ as appears in Figure \ref{fig:zipf}.
To have a good starting point, we initiate our algorithm with a the naive shuffling search method (described above). This way we apply our optimization process on the best representation a random bit shuffling could attain (with a negligible $d\log{d}$ redundancy cost).  As we can see in Figure \ref{fig:zipf}.B, we  minimize (\ref{compression_total}) over $I_0=64$ and $\sum_{v=1}^{B}{\hat{H}({\underline{Y}}^{(v)})}=9.09$  to achieve a total of $9.144 \cdot 10^6$ bits for the entire dataset. 

Table \ref{table:zipf_results} summarizes the results we achieve for different block sizes $B$. We see that the lowest compression size is achieved over $B=2$, i.e. two blocks. The reason is that for a fixed $n$, the redundancy is approximately exponential in the size of the block $b$. This means the redundancy drops exponentially with the number of blocks while the minimum of $\sum_{v=1}^{B}{\hat{H}({\underline{Y}}^{(v)})}$ keeps increasing. In other words, in this example we earn a great redundancy reduction when moving to a two-block representation while not losing too much in terms of the average code-word length we can achieve. We further notice that the optimal iterations number grows with the number of blocks. This results from the cost of describing the optimal transformation for each block, at each iteration, $I_0B\cdot b2^b$, which exponentially increase with the block size $b$. Comparing our results with the three methods described above we are able to reduce the total compression size in $8\cdot10^5$ bits, compared to the minimum among all our competitors. 
  
\begin{figure}[!ht]
\centering
\includegraphics[width = 0.6\textwidth,bb= 35 123 775 490,clip]{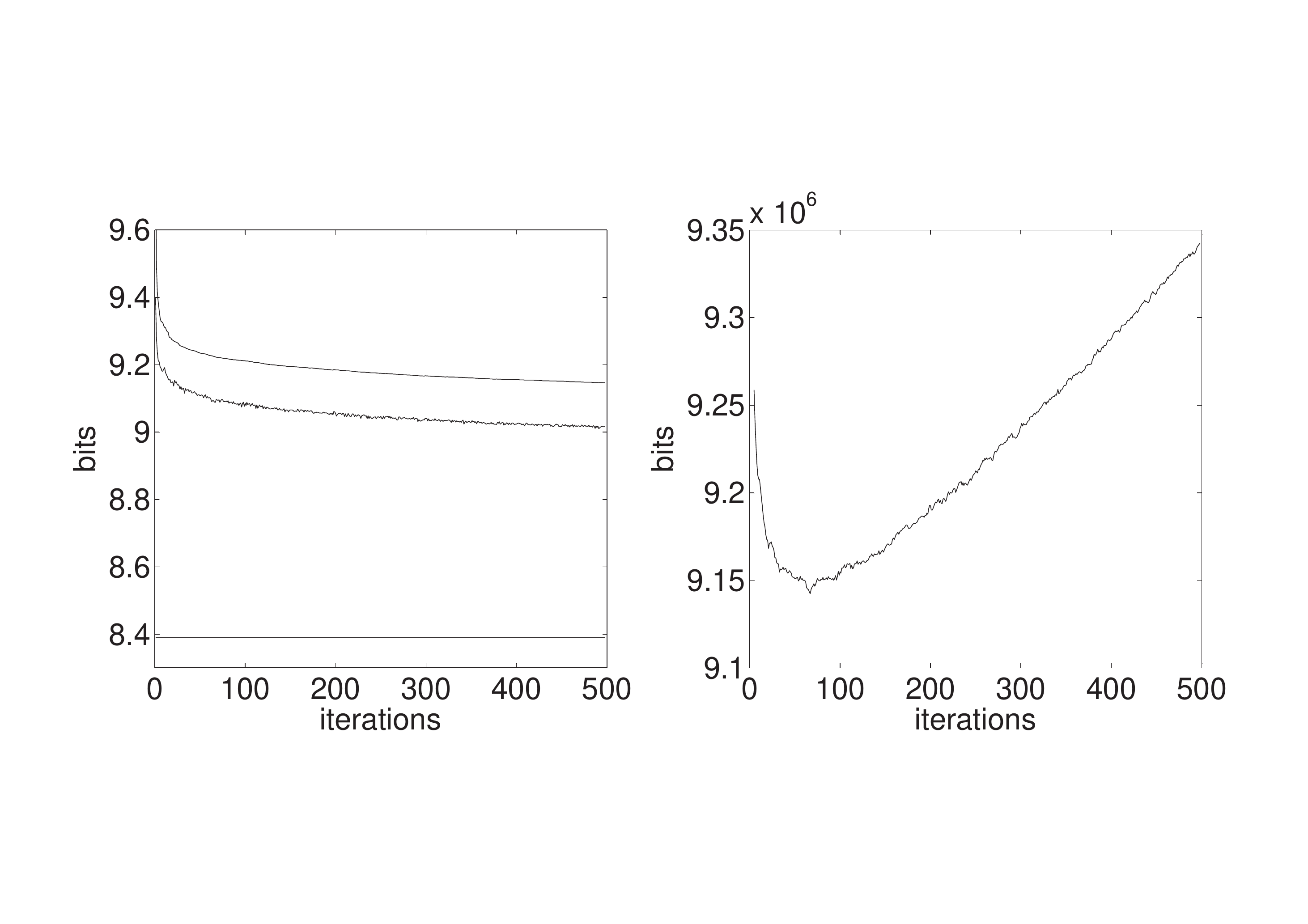}
\caption{Large Alphabet Source Coding via Generalized BICA with $B=4$ blocks. Left side (A): the horizontal line indicated the empirical entropy of $\underline{X}$. The upper curve is the sum of marginal empirical entropies and the lower curve is the sum of empirical block entropies (the outcome of our suggested framework). Right side (B): total compression size of our suggested method at each iteration.}
\label{fig:zipf}
\end{figure}

\begin{table}[!ht]
\caption{Block-Wise Compression via Generalized BICA Method for different block sizes}
\renewcommand{\baselinestretch}{1}\footnotesize
\label{table:zipf_results} 
\centering
\begin{tabular}{|M{1.8cm}|M{2.5cm}|M{1.8cm}|M{2cm}|M{1.8cm}|M{2.8cm}|N}  

\hline

\begin{tabular}{@{}c@{}}Number of \\ Blocks\end{tabular}
&\begin{tabular}{@{}c@{}} Minimum of \\ $\sum_{v=1}^{B}{\hat{H}({\underline{Y}}^{(v)})}$\end{tabular}
&Optimal $I_0$
&\begin{tabular}{@{}c@{}} Compressed \\  Data Size\end{tabular}
&Redundancy
&\begin{tabular}{@{}c@{}} Total Compression\\ Size\end{tabular}&\\[15pt]

\hline
$2$   &$8.69$ & $5$ & $8.69\cdot 10^6$ & $1.15\cdot 10^5$  & $\bold{8.805\cdot10^6}$ & \\[10pt]  
\hline
$3$   & $8.93$ & $19$ & $8.93\cdot 10^6$ & $5.55\cdot 10^4$  & $8.985\cdot10^6$ & \\[10pt]  
\hline
$4$   & $9.09$ & $64$ & $9.09\cdot 10^6$ & $5.41\cdot 10^4$  & $9.144\cdot10^6$ & \\[10pt] 
\hline
\end{tabular}
\end{table}

\subsection{real-world experiments}

We now turn to demonstrate our compression framework on real world data sets.
For this purpose we use collections of word frequencies of different natural languages. These word frequency lists are publicly available\footnote{\url{http://en.wiktionary.org/wiki/Wiktionary:Frequency_lists}} and describe the frequency each word appears in a language, based on hundreds of millions of words, collected from open source subtitles\footnote{\url{www.opensubtitles.org}} or based on different dictionaries and glossaries \cite{new2004lexique}.
Since each word frequency list holds several hundreds of thousands of different words, we choose a binary $d=20$ bit representation. We sample $10^7$ words from each language and examine our suggested framework, compared with the compression schemes mentioned above. The results we achieve are summarized in Table \ref{table:languages}. Notice the last column provides the percentage of the redundancy we save, which is essentially the most we can hope for (as we cannot go lower than $n\cdot \hat{H}(\underline{X})$ bits). As in the previous experiment, our suggested algorithm achieves the lowest compression size applied with two blocks after approximately $I_0=10$ iterations, from the same reasons mentioned above. 
Compared to the other methods, our suggested framework shows to achieve significantly lower compression sizes for all languages, saving an average of over one million bits per language.

\begin{table}[ht]
\caption{Natural Languages Experiment. For each compression method (D), (O) and (T) stand for the compressed data, the overhead and the total compression size (in bits) respectively. The We Save column is the amount of bits saved by our method, and its corresponding percentage of (O) and (T). $n_0$ is the number of unique words observed in each language, of the $10^7$ sampled words. Notice the Chinese corpus refers to characters.\newline}

\renewcommand{\baselinestretch}{1}\footnotesize
\label{table:languages} 
\centering
\begin{tabular}{|M{1.42cm}|M{2.3cm}|M{2.3cm}|M{2.3cm}|M{2.3cm}|M{1.95cm}|N}  

\hline

\begin{tabular}{@{}c@{}} Language \\ $(n_0)$ \end{tabular}
&\begin{tabular}{@{}c@{}} Standard \\ Compression \end{tabular}
&\begin{tabular}{@{}c@{}} Patterns \\ Compression \end{tabular}
&\begin{tabular}{@{}c@{}} Canonical \\  Huffman \end{tabular}
&\begin{tabular}{@{}c@{}} Our Suggested \\  Method \end{tabular}
&We Save&\\[20pt]

\hline

\begin{tabular}{@{}c@{}} English \\ $(129,834)$ \end{tabular}   
&\begin{tabular}{@{}c@{}} (D) $9.709\cdot10^{7}$ \\ (O) $2.624 \cdot10^{6}$\\ (T)  $9.971\cdot10^{7}$ \end{tabular} 
&\begin{tabular}{@{}c@{}} (D) $9.709\cdot10^{7}$ \\ (O) $2.597 \cdot10^{6}$ \\ (T)  $9.968\cdot10^{7}$ \end{tabular} 
&\begin{tabular}{@{}c@{}} (D) $9.737\cdot10^{7}$ \\ (O)  $5.294 \cdot10^{6}$ \\ (T)  $1.027 \cdot10^{8}$ \end{tabular} 
&\begin{tabular}{@{}c@{}} (D) $9.820 \cdot10^{7}$ \\ (O) $2.207 \cdot10^{5}$ \\ (T) $\bold{9.842 \cdot10^{7}}$ \end{tabular}  
&\begin{tabular}{@{}c@{}} $1.262 \cdot10^{6}$ \\  (O) $48.6\%$  \\ (T) $1.27\%$   \end{tabular}    & \\[32pt]

\hline

\begin{tabular}{@{}c@{}} Chinese \\ $(87,777)$ \end{tabular}     
&\begin{tabular}{@{}c@{}} (D)  $1.020\cdot10^{8}$\\ (O)  $2.624\cdot10^{6}$ \\ (T)   $1.046\cdot10^{8}$ \end{tabular} 
&\begin{tabular}{@{}c@{}} (D)  $1.020\cdot10^{8}$ \\ (O) $1.696\cdot10^{6}$ \\ (T)  $1.037\cdot10^{8}$ \end{tabular} 
&\begin{tabular}{@{}c@{}} (D) $1.023\cdot10^{8}$ \\ (O) $3.428\cdot10^{6}$ \\ (T) $1.057\cdot10^{8}$ \end{tabular} 
&\begin{tabular}{@{}c@{}} (D) $1.028\cdot10^{8}$ \\ (O) $2.001\cdot10^{5}$ \\ (T)  $\bold{1.030\cdot10^{8}}$ \end{tabular}   
&\begin{tabular}{@{}c@{}} $6.566 \cdot10^{5}$ \\  (O) $38.7\%$ \\ (T) $0.63\%$\end{tabular}    & \\[32pt]   
\hline

\begin{tabular}{@{}c@{}} Spanish \\ $(185,866)$ \end{tabular}     
&\begin{tabular}{@{}c@{}} (D)  $1.053\cdot10^{8}$\\ (O)  $2.624\cdot10^{6}$ \\ (T)   $1.079\cdot10^{8}$ \end{tabular} 
&\begin{tabular}{@{}c@{}} (D)  $1.053\cdot10^{8}$ \\ (O) $3.718\cdot10^{6}$ \\ (T)  $1.090\cdot10^{8}$ \end{tabular} 
&\begin{tabular}{@{}c@{}} (D)  $1.055\cdot10^{8}$ \\ (O) $7.700\cdot10^{6}$ \\ (T)  $1.132\cdot10^{8}$  \end{tabular} 
&\begin{tabular}{@{}c@{}} (D) $1.067\cdot10^{8}$ \\ (O) $2.207\cdot10^{5}$ \\ (T)  $\bold{1.069\cdot10^{8}}$ \end{tabular}   
&\begin{tabular}{@{}c@{}} $9.631 \cdot10^{5}$ \\  (O) $36.7\%$ \\ (T) $0.89\%$\end{tabular}    & \\[32pt]   
\hline

\begin{tabular}{@{}c@{}} French \\ $(139,674)$ \end{tabular}     
&\begin{tabular}{@{}c@{}} (D)  $1.009\cdot10^{8}$\\ (O)  $2.624\cdot10^{6}$ \\ (T)   $1.035\cdot10^{8}$ \end{tabular} 
&\begin{tabular}{@{}c@{}} (D)  $1.009\cdot10^{8}$ \\ (O) $2.794\cdot10^{6}$ \\ (T)  $1.036\cdot10^{8}$ \end{tabular} 
&\begin{tabular}{@{}c@{}} (D)  $1.011\cdot10^{8}$ \\ (O)  $5.745\cdot10^{6}$  \\ (T)  $1.069\cdot10^{8}$ \end{tabular} 
&\begin{tabular}{@{}c@{}} (D) $1.017\cdot10^{8}$ \\ (O) $2.207\cdot10^{5}$ \\ (T)  $\bold{1.019\cdot10^{8}}$ \end{tabular}   
&\begin{tabular}{@{}c@{}} $1.557 \cdot10^{6}$ \\  (O) $59.3\%$\\ (T) $1.50\%$ \end{tabular}    & \\[32pt]   
\hline

\begin{tabular}{@{}c@{}} Hebrew \\ $(250,917)$ \end{tabular}     
&\begin{tabular}{@{}c@{}} (D)  $1.173\cdot10^{8}$\\ (O)  $2.624\cdot10^{6}$ \\ (T)   $1.200\cdot10^{8}$ \end{tabular} 
&\begin{tabular}{@{}c@{}} (D)  $1.173\cdot10^{8}$ \\ (O) $5.019\cdot10^{6}$ \\ (T)  $1.224\cdot10^{8}$ \end{tabular} 
&\begin{tabular}{@{}c@{}} (D)  $1.176\cdot10^{8}$ \\ (O)  $1.054\cdot10^{7}$  \\ (T)  $1.281\cdot10^{8}$ \end{tabular} 
&\begin{tabular}{@{}c@{}} (D) $1.190\cdot10^{8}$ \\ (O) $1.796\cdot10^{5}$ \\ (T)  $\bold{1.192\cdot10^{8}}$ \end{tabular}   
&\begin{tabular}{@{}c@{}} $7.837 \cdot10^{5}$ \\  (O) $29.9\%$ \\ (T) $0.65\%$  \end{tabular}    & \\[32pt]   
\hline

\end{tabular}
\end{table}
\vspace{1em}

\section{Vector Quantization}

Vector quantization refers to a lossy compression setup, in which a high-dimensional vector $\underline{X} \in \mathbb{R}^d$ is to be represented by a finite number of points. This means that the high dimensional observed samples are clustering into groups, where each group is represented by a representative point. For example, the famous $k$-means algorithm \cite{macqueen1967some} provides a method to determine the clusters and the representative points (centroids) for an Euclidean loss function. Then, these centroid points that represent the observed samples are compressed in a lossless manner. 

As described above, in the lossy encoding setup one is usually interested in minimizing the amount of bits which represent the data for a given a distortion (or equivalently, minimizing the distortion for a given compressed data size).  The rate-distortion function defines the lower bound on this objective. In vector quantization, the representation is a deterministic mapping (defined as $P(\underline{Y}|\underline{X})$) from a source $\underline{X}$ to its quantized version $\underline{Y}$. Therefore we have that $H(\underline{Y}|\underline{X})=0$ and the rate distortion is simply   

\begin{equation}
\label{R(D)}
R\left(D\right)=\min_{P(\underline{Y}|\underline{X})}H(\underline{Y})\,\, s.t. \,\, \mathbb{E}\left\{D(\underline{X},\underline{Y})\right\} \leq D
\end{equation}

where $D(\underline{X},\underline{Y})$ is some distortion measure between $\underline{X}$ and $\underline{Y}$. 

\subsection{Entropy Constrained Vector Quantization}

The Entropy Constrained Vector Quantization (ECVQ) is an iterative method for clustering the observed samples into centroid points which are later represented by a minimal average codeword length. The ECVQ algorithm aims to find the minimizer of  

\begin{equation}
\label{ECVQ minimization}
J\left(D\right)=\min \mathbb{E}\left\{l(\underline{X})\right\} \,\, s.t. \,\, \mathbb{E}\left\{D(\underline{X},\underline{Y})\right\} \leq D
\end{equation}
where the minimization is over three terms: the vector quantizer (of $\underline{X}$), the entropy encoder (of the quantized version of $\underline{X}$) and the reconstruction module of $\underline{X}$ from its quantized version.
   
Let us use a similar notation to \cite{chou1989entropy}. Denote the vector quantizer $\alpha: \underline{x} \rightarrow \mathpzc{C}$ as a mapping from an observed sample to a cluster in $\mathpzc{C}$, where $\mathpzc{C}$ is a set of $m$ clusters.  Further, let $\gamma: \mathpzc{C} \rightarrow \mathpzc{c}$ be a mapping  from a cluster to a codeword. Therefore, the composition $\alpha \circ \gamma$ is the encoder. In the same manner, the decoder is a composition $\gamma^{-1} \circ \beta$, where $\gamma^{-1}$ is the inverse mapping from a codeword to a cluster and $\beta: \mathpzc{C} \rightarrow \underline{y}$ is the reconstruction of $\underline{x}$ from its quantized version. Therefore, the Lagrangian of the optimization problem (\ref{ECVQ minimization}) is

\begin{equation}
\label{ECVQ}
L_{\lambda}(\alpha,\beta,\gamma) =\mathbb{E}\left\{D(\underline{X},\beta\left(\alpha\left(\underline{X}\right)\right)+\lambda\left|\gamma\left(\alpha\left(\underline{X}\right)\right)\right|\right\}
\end{equation}

The ECVQ objective is to find the coder $(\alpha,\beta,\gamma)$ which minimizes
this functional. In \cite{chou1989entropy}, Chou et al. suggest an iterative descent algorithm similar to the
generalized Lloyd algorithm \cite{lloyd1982least}. Their algorithm starts with an arbitrary initial coder. Then, for a fixed $\gamma$ and $\beta$ it finds a clustering $\alpha(\underline{X})$ as the minimizer of:

\begin{equation}
\label{ECVQ_a}
\alpha(\underline{X})=\argmin_{i \in \mathpzc{C}}\left\{D(\underline{X},\beta\left(i)\right)+\lambda\left|\gamma\left(i\right)\right| \right\}.
\end{equation}
Notice that for an Euclidean distortion, this problem is simply $k$-means clustering, with a ``bias" of $\lambda\left|\gamma\left(i\right)\right|$ on its objective function.

For a fixed  $\alpha$ and $\beta$, we notice that each cluster  $i \in \mathpzc{C}$ has an induced probability of occurrence $p_i$. Therefore,  the entropy encoder $\gamma$ is designed accordingly, so that $|\gamma(i)|$ is minimized. The Huffman algorithm could be incorporated into the design algorithm at this stage. However, for simplicity, allow the fiction that codewords  can have non-integer lengths, and assign 

\begin{equation}
\label{ECVQ_b}
\left|\gamma\left(i\right)\right|=-\log(p_i).
\end{equation}

Finally, for a fixed $\alpha$ and $\gamma$, the reconstruction module $\beta$ is 
\begin{equation}
\label{ECVQ_c}
\beta(i)=\argmin_{\underline{y} \in \underline{Y}} \mathbb{E} \left\{ D\left(\underline{X},\underline{y}\right) | \alpha(\underline{X})=i \right\}.
\end{equation}
For example, for an euclidean distortion measure, $\beta(i)$'s are simply the centroids of the clusters $i \in \mathpzc{C}$.

Notice that the value objective (\ref{ECVQ}), when applying each of the three steps (\ref{ECVQ_a}-\ref{ECVQ_c}), is non-increasing. Therefore, as we apply these three steps repeatedly, the ECVQ algorithm is guarenteed to converge to a local minimum. 
Moreover, notice that for an Euclidean distortion measure, step (\ref{ECVQ_a}) of the ECVQ algorithm is a variant of the $k$-means algorithm. However, the $k$-means algorithms is known to be computationally difficult to execute as the number of observed samples increases. Hence, the ECVQ algorithm is also practically limited to a relatively small number of samples.

As in previous sections, we argue that when the alphabet size is large (corresponds to low distortion), it may be better to encode the source component-wise. This means, we would like to construct a vector quantizer such that the sum marginal entropies of $\underline{Y}$ is minimal, subject to the same distortion constraint as in (\ref{R(D)}). Specifically,

\begin{equation}
\label{R(D)_ours}
\tilde{R}\left(D\right)=\min_{P(\underline{Y}|\underline{X})}\sum_{j=1}^d H_b(Y_j)\,\, s.t. \,\, \mathbb{E}\left\{D(\underline{X},\underline{Y})\right\} \leq D
\end{equation}

Notice that for a fixed distortion value, $R\left(D\right) \leq \tilde{R}\left(D\right)$ as sum of marginal entropies is bounded from below by the joint entropy. However, since encoding a source over a large alphabet may result with  a large redundancy (as discussed in previous sections), the average codeword length of the ECVQ (\ref{ECVQ minimization}) is not necessarily lower than our suggested method (and usually even much larger). 

Our suggested version of the ECVQ works as follows: we construct $\alpha$ and $\beta$ in the same manner as  ECVQ does, but replace the Huffman encoder (in $\gamma$) with our suggested relaxation to the BICA problem (Section \ref{piece-wise solution}). This means that for a fixed $\alpha, \beta$, which induce a random vector over a finite alphabet size (with a finite probability distribution), we seek for a representation which makes its components ``as statistically independent as possible". The average codeword lengths are then achieved by arithmetic encoding on each of these components.

This scheme results not only with a different codebook, but also with a different quantizer than the ECVQ. This means that a quantizer which strives to construct a random vector (over a finite alphabet) with the lowest possible average codeword length (subject to a distortion constraint) is different than our quantizer, which seeks for a random vector with a minimal sum of marginal average codeword lengths (subject to the same distortion).

Our suggested scheme proves to converge to a local minimum in the same manner that ECVQ does. That is, for a fixed $\alpha, \beta$, our suggested relaxed BICA method finds a binary representation which minimizes the sum of marginal entropies. Therefore, we can always compare the representation it achieves in the current iteration with the representation it found in the previous iteration, and choose the one which minimizes the objective. This leads to a non-increasing objective each time it is applied. Moreover, notice that we do not have to use the complicated relaxed BICA scheme and apply the simpler order permutation (Section \ref{ordering solution}). This would only result with a possible worse encoder but local convergence is still guaranteed.

To illustrate the performance of our suggested method we conduct the following experiment:
We draw $1000$ independent samples from a six dimensional bivariate Gaussian mixture.  
We apply both the ECVQ algorithm, and our suggest BICA variation of the ECVQ, on these samples. Figure \ref{fig:ECVQ} demonstrates the average codeword length we achieve for different Euclidean (mean square error) distortion levels.

\begin{figure}[ht]
\centering
\includegraphics[width = 0.5\textwidth,bb= 0 190 600 590,clip]{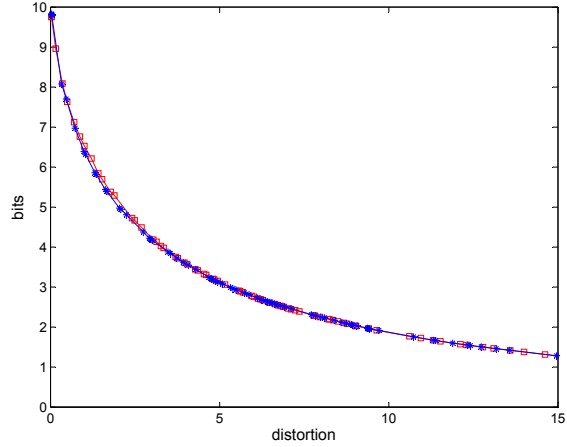}
\caption{ECVQ simulation. The curve with the squares corresponds to the average codeword length achieved by the classical ECVQ algorithm. The curve with the asterisks is the average codeword length achieved by our suggested BICA variant to the ECVQ algorithm} 
\label{fig:ECVQ}
\end{figure}

We first notice that both methods performs almost equally well. The reason is that $1000$ observations do not necessitate an alphabet size which is greater than $m=1000$ to a attain a zero distortion. In this ``small alphabet" regime, our suggest approach does not demonstrate its advantage over classical methods, as discussed in previous sections. However, we can still see it performs equally well. 

As we try to increase the number of observations (and henceforth the alphabet size) we encounter computational difficulties, which result from repeatedly performing a variant of the $k$-means algorithm (\ref{ECVQ_a}). This makes both ECVQ and our suggested method quite difficult to implement over a ``large alphabet size" (many observations and low distortion).

However, notice that if Gersho's conjecture \cite{gersho1979asymptotically} is true, and the best space-filling polytope is a lattice, then the optimum $d$-dimensional ECVQ at high resolution (low distortion) regime takes the form of a lattice \cite{zamir2014lattice}. This means that for this setup, $\gamma$ is simply a lattice quantizer. This idea is described in further detail in the next section.

\subsection{Vector Quantization with Fixed Lattices}

As demonstrated in the previous section, applying the ECVQ algorithm to a large number of observations $n$ with a low distortion constraint, is impractical. To overcome this problem we suggest using a predefined quantizer in the form of a lattice. This means that instead of seeking for a quantizer $\gamma$ that results with a random vector (over a finite alphabet) with a low average codeword length, we use a fixed quantizer, independent of the samples, and construct a codebook accordingly. Therefore, the performance of the codebook strongly depends on the empirical entropy of the quantized samples.

Since we are dealing with fixed lattices (vector quantizers), it is very likely that the empirical entropy of the quantized samples would be significantly different (lower) than the true entropy in low distortion regimes (large alphabet size). Therefore, the compressed data would consist of both the compressed samples themselves and a redundancy term, as explained in detail in Section \ref{universal source coding}.

Here again, we suggest that instead of encoding the quantized samples over a large alphabet size, we should first represent them in an ``as statistically independent  as possible" manner, and encode each component separately.

To demonstrate this scheme we turn to a classic quantizing problem, of a standard $d$-dimensional normal distribution.
Notice this quantizing problem is very well studied \cite{cover2012elements} and a lower bound for the average codeword length, for a given distortion value $D$, is given by

\begin{equation}
\label{normal R(D)}
R(D)=\max\left\{\frac{d}{2}\log\left(\frac{d}{D}\right),0\right\}.
\end{equation}
 
In this experiment we draw $n$ samples from a standard $d$-dimensional multivariate normal distribution. Since the span of the normal distribution is infinite, we use a lattice which is only defined in a finite sphere. This means that each sample which falls outside this sphere is quantized to its nearest quantization point on the surface of the sphere. We define the radius of the sphere to be $5$ times the variance of the source (hence $r=5$). We first draw $n=10^5$ samples from $d=3,4$ and $8$ dimensional normal distributions. For $d=3$ we use a standard cubic lattice, while for $d=4$ we use an hexagonal lattice \cite{zamir2014lattice}. For $d=8$ we use an $8$-dimensional integer lattice \cite{zamir2014lattice}. The upper row of Figure \ref{fig:lattice1} demonstrates the results we achieve for the three cases respectively (left to right), where for each setup we compare the empirical joint entropy of the quantized samples (dashed line) with the sum of empirical marginal entropies, following our suggested approach (solid line). We further indicate the rate distortion lower bound (\ref{normal R(D)}) for each scenario, calculated according to the true distribution (line with x's). Notice the results are normalized according to the dimension $d$.
As we can see, the sum of empirical marginal entropies is very close to the empirical joint entropy for $d=3,4$. The rate distortion indeed bounds from below both of these curves. For $d=8$ the empirical joint entropy is significantly lower than the true entropy (especially in the low distortion regime). This is a results of an alphabet size which is larger than the number of samples $n$. However, in this case too, the sum of empirical marginal entropies is close to the joint empirical entropy. The behavior described above is maintained as we increase the number of samples to $n=10^6$, as indicated in the lower row of Figure \ref{fig:lattice1}. Notice again that the sum of marginal empirical entropies is very close to the joint empirical entropy, especially on the bounds (very high and very low distortion). The reason is that in both of these cases, where the joint probability is either almost uniform (low distortion) or almost degenerate (high distortion), there exists a representation which makes the components statistically independent. In other words, both the uniform and degenerate distributions can be shown to satisfy $\sum_{j=1}^d H_b(Y_j)=H(\underline{Y})$ under the order permutation.

\begin{figure}[ht]
\centering
\includegraphics[width = 0.55\textwidth,bb= 115 105 680 500,clip]{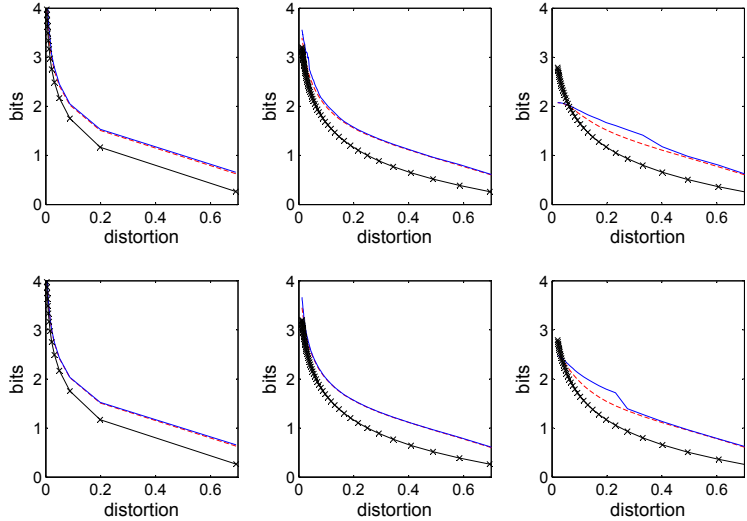}
\caption{Lattice quantization of $d$-dimensional standard normal distribution. The upper row corresponds to $n=10^5$ drawn samples while the lower row is $n=10^6$ samples. The columns correspond to the dimensions $d=3,4$ and $8$ respectively. In each setup, the dashed line is the joint empirical entropy while the solid line is the sum of marginal empirical entropies, following our suggested method. The line with the x's is the rate distortion (\ref{normal R(D)}), calculated according to the true distribution.} 
\label{fig:lattice1}
\end{figure}

We further present the total compression size of the quantized samples in this universal setting. Figure \ref{fig:lattice2} shows the amount of bits required for the quantized samples, in addition to the overhead redundancy, for both Huffman coding and our suggested scheme. As before, the rows correspond to $n=10^5$ and $n=10^6$ respectively, while the columns are $d=3,4$ and $8$, from left to right. We first notice that for $d=3,4$ both methods perform almost equally well. However, as $d$ increases, there exists a significant different between the classical coding scheme and our suggested method, for low distortion rate. The reason is that for larger dimensions, and low distortion rate, we need a very large number of quantization points, hence, a large alphabet size. This is exactly the regime where our suggested method demonstrates its enhanced capabilities, compared with standard methods.  

\begin{figure}[h]
\centering
\includegraphics[width = 0.55\textwidth,bb= 130 105 660 505,clip]{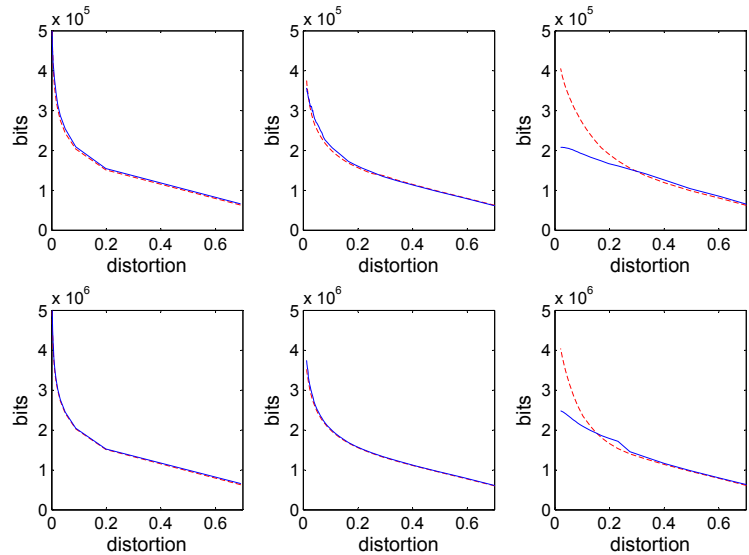}
\caption{Total compression size for lattice quantization of $d$-dimensional standard normal distribution. The upper row corresponds to $n=10^5$ drawn samples while the lower row is $n=10^6$ samples. The columns correspond to the dimensions $d=3,4$ and $8$, from left to right. In each setup, the dashed line is the total compression size through classical universal compression while the solid line is the total compression size using our suggested relaxed generalized BICA approach.} 
\label{fig:lattice2}
\end{figure}

\section{conclusions}

In this work we introduce a conceptual framework for large alphabet source coding. We suggest to decompose a large alphabet source into components which are ``as statistically independent as possible" and then encode each component separately. This way we overcome the well known difficulties of large alphabet source coding,  at the cost of:
\begin{enumerate}  [(i)]
\item  Redundancy which results from encoding each component separately.   \label{a} 
\item  Computational difficulty of finding a transformation which decomposes the source. \label{b}
\end{enumerate}

We propose two methods which focus on minimizing these costs. The first method is a piece-wise linear relaxation to the BICA (Section \ref{piece-wise solution}). This method strives to decrease (\ref{a}) as much as possible, but its computationally complexity is quite involved. Our second method is the order permutation (Section \ref{ordering solution}) which is very simple to implement (hence, focuses on  (\ref{b})) but results with a larger redundancy as it is a greedy solution to (\ref{eq:min_criterion}).

We show that while not every source can be efficiently decomposed into independent components, the vast majority of sources do decompose very well  (that is, with only a small redundancy term) as the alphabet size increases. More specifically, we show that the average difference between the sum of marginal entropies (after the ``simpler" order permutation is applied) and the joint entropy of the source is bounded by a small constant, as $m$ increases. This means that even the order permutation, which is inferior to the relaxed BICA method, is capable of achieving a very low redundancy for many large alphabet sources.

We demonstrate our suggested framework on three major large alphabet compression scenarios, which are the classic lossless source coding problem, universal source coding and vector quantization. We show that in all of these cases, our suggested approach achieves a lower average codeword length than most commonly used methods.

All this together leads us to conclude that decomposing a large alphabet source into `` as statistically independent as possible" components, followed by entropy encoding of each components separately,  is both theoretically and practically beneficial.

\section{Appendix}

\begin{appendix_proposition}
Let $\underline{X} \sim \underline{p}$ be a random vector of an alphabet size $m$ and joint probability distribution $\underline{p}$. The expected joint entropy of $\underline{X}$, when the expectation is  over a uniform simplex of joint probability distributions $\underline{p}$ is
\begin{equation}\nonumber
\mathbb{E}_{\underline{\smash{p}}}\left\{H(\underline{X}) \right\}=\frac{1}{\log_e{2}}\left(\psi(m+1)-\psi(2)\right) 
\end{equation}
where $\psi$ is the \textit{digamma function}.
\end{appendix_proposition}
\begin{proof}
We first notice that a uniform distribution over a simplex of a size $m$ is equivalent to a Direchlet distribution with parameters $\alpha_i=1, i=1, \dots,m$. The Direchlet distribution can be generated through normalized independent random variables from a Gamma distribution. This means that  for statistically independent $Z_i \sim \Gamma(k_i=1, \theta_i=1), i=1, \dots,m$ we have that 
\begin{equation}
\label{uniform simplex}
\frac{1}{\sum_{k=1}^{m}Z_k}\left(Z_1, \dots Z_m\right) \sim Dir\left(\alpha_1=1, \dots, \alpha_m=1\right).
\end{equation}

We are interested in the expected joint entropy of draws from (\ref{uniform simplex}), 
\begin{equation}
\label{joint entropy}
\mathbb{E}_{\underline{\smash{p}}}\left\{H(\underline{X}) \right\}=-\sum_{i=1}^{m}\mathbb{E}\left\{ \frac{Z_i}{\sum_{k=1}^{m}Z_k} \log{\frac{Z_i}{\sum_{k=1}^{m}Z_k}}   \right\} = -m\mathbb{E}\left\{ \frac{Z_i}{\sum_{k=1}^{m}Z_k} \log{\frac{Z_i}{\sum_{k=1}^{m}Z_k}}\right\}
\end{equation}

It can be shown that for two independent Gamma distributed random variables $X_1\sim\Gamma(\alpha_1,\theta)$ and $X_2\sim\Gamma(\alpha_2,\theta)$, the ratio $\frac{X_1}{X_1+X_2}$ follows a Beta distribution with parameters $(\alpha_1,\alpha_2)$. Let us denote $\tilde{Z}_i \triangleq \frac{Z_i}{\sum_{k=1}^{m}Z_k}=\frac{Z_i}{Z_i+\sum_{k \neq i} Z_k}$. Notice that  $Z_i\sim\Gamma(1,1)$ and $\sum_{k \neq i} Z_i\sim\Gamma(m-1,1)$ are mutually independent. Therefore,  

\begin{equation}
f_{\tilde{Z}_i}(z) =Beta(1,m-1)=\frac{(1-z)^{(m-2)}}{B(1,m-1)}.
\end{equation}

This means that
\begin{align}
\label{derivation1}
\mathbb{E}\left\{ \frac{Z_i}{\sum_{k=1}^{m}Z_k} \log{\frac{Z_i}{\sum_{k=1}^{m}Z_k}}\right\}=&\mathbb{E}\left\{ \tilde{Z}_i \log \tilde{Z}_i\right\}=\\\nonumber
&\frac{1}{B(1,m-1)}\int_0^1 z\log{(z)} (1-z)^{(m-2)}dz=\\\nonumber
&\frac{B(2,m-1)}{B(1,m-1)}\frac{1}{\log_e{(2)}}\frac{1}{B(2,m-1)}\int_0^1 \log_e{(z)}z (1-z)^{(m-2)}dz=\\\nonumber
&\frac{1}{m\log_e{(2)}}\mathbb{E}\left(\log_e{(U)}\right)
\end{align}

where $U$ follows a Beta distribution with parameters $(2, m-1)$. 

The expected natural logarithm of a Beta distributed random variable, $V\sim Beta(\alpha_1, \alpha_2)$, follows $\mathbb{E}\left(\log_e{(V)}\right)=\psi(\alpha_1)-\psi(\alpha_1+\alpha_2)$ where $\psi$ is the \textit{digamma function}.
Putting this together with (\ref{joint entropy}) and (\ref{derivation1}) we attain

\begin{equation}
\mathbb{E}_{\underline{\smash{p}}}\left\{H(\underline{X}) \right\} = -m\mathbb{E}\left\{ \frac{Z_i}{\sum_{k=1}^{m}Z_k} \log{\frac{Z_i}{\sum_{k=1}^{m}Z_k}}\right\}=\frac{1}{\log_e{(2)}}\left(\psi(m+1)-\psi(2)\right) 
\end{equation}

\begin{flushright}
$\blacksquare$
\end{flushright}

\end{proof}

\section*{Acknowledgment}
This research was supported in part by a returning scientists grant
to Amichai Painsky from the Israeli Ministry of Science, and by Israeli Science
Foundation grant 1487/12.

\bibliographystyle{IEEEtran}
\bibliography{refs}

\begin{thebibliography}{10}
\providecommand{\url}[1]{#1}
\csname url@samestyle\endcsname
\providecommand{\newblock}{\relax}
\providecommand{\bibinfo}[2]{#2}
\providecommand{\BIBentrySTDinterwordspacing}{\spaceskip=0pt\relax}
\providecommand{\BIBentryALTinterwordstretchfactor}{4}
\providecommand{\BIBentryALTinterwordspacing}{\spaceskip=\fontdimen2\font plus
\BIBentryALTinterwordstretchfactor\fontdimen3\font minus
  \fontdimen4\font\relax}
\providecommand{\BIBforeignlanguage}[2]{{%
\expandafter\ifx\csname l@#1\endcsname\relax
\typeout{** WARNING: IEEEtran.bst: No hyphenation pattern has been}%
\typeout{** loaded for the language `#1'. Using the pattern for}%
\typeout{** the default language instead.}%
\else
\language=\csname l@#1\endcsname
\fi
#2}}
\providecommand{\BIBdecl}{\relax}
\BIBdecl

\bibitem{huffman1952method}
D.~A. Huffman \emph{et~al.}, ``A method for the construction of minimum
  redundancy codes,'' \emph{Proceedings of the IRE}, vol.~40, no.~9, pp.
  1098--1101, 1952.

\bibitem{witten1987arithmetic}
I.~H. Witten, R.~M. Neal, and J.~G. Cleary, ``Arithmetic coding for data
  compression,'' \emph{Communications of the ACM}, vol.~30, no.~6, pp.
  520--540, 1987.

\bibitem{cover2012elements}
T.~M. Cover and J.~A. Thomas, \emph{Elements of information theory}.\hskip 1em
  plus 0.5em minus 0.4em\relax John Wiley \& Sons, 2012.

\bibitem{davisson1973universal}
L.~D. Davisson, ``Universal noiseless coding,'' \emph{IEEE Transactions on
  Information Theory}, vol.~19, no.~6, pp. 783--795, 1973.

\bibitem{painsky2014generalized}
A.~Painsky, S.~Rosset, and M.~Feder, ``Generalized {B}inary {I}ndependent
  {C}omponent {A}nalysis,'' in \emph{2014 IEEE International Symposium on
  Information Theory (ISIT)}.\hskip 1em plus 0.5em minus 0.4em\relax IEEE,
  2014, pp. 1326--1330.

\bibitem{moffat1997implementation}
A.~Moffat and A.~Turpin, ``On the implementation of minimum redundancy prefix
  codes,'' \emph{IEEE Transactions on Communications}, vol.~45, no.~10, pp.
  1200--1207, 1997.

\bibitem{yang2000universal}
E.-H. Yang and Y.~Jia, ``Universal lossless coding of sources with large and
  unbounded alphabets,'' in \emph{Numbers, Information and Complexity}.\hskip
  1em plus 0.5em minus 0.4em\relax Springer, 2000, pp. 421--442.

\bibitem{moffat1998arithmetic}
A.~Moffat, R.~M. Neal, and I.~H. Witten, ``Arithmetic coding revisited,''
  \emph{ACM Transactions on Information Systems (TOIS)}, vol.~16, no.~3, pp.
  256--294, 1998.

\bibitem{shtarkov1977coding}
J.~Shtarkov, ``Coding of discrete sources with unknown statistics,''
  \emph{Topics in Information Theory}, vol.~23, pp. 559--574, 1977.

\bibitem{orlitsky2004speaking}
A.~Orlitsky and N.~P. Santhanam, ``Speaking of infinity,'' \emph{IEEE
  Transactions on Information Theory}, vol.~50, no.~10, pp. 2215--2230, 2004.

\bibitem{szpankowski2012minimax}
W.~Szpankowski and M.~J. Weinberger, ``Minimax pointwise redundancy for
  memoryless models over large alphabets,'' \emph{IEEE Transactions on
  Information Theory}, vol.~58, no.~7, pp. 4094--4104, 2012.

\bibitem{orlitsky2004universal}
A.~Orlitsky, N.~P. Santhanam, and J.~Zhang, ``Universal compression of
  memoryless sources over unknown alphabets,'' \emph{IEEE Transactions on
  Information Theory}, vol.~50, no.~7, pp. 1469--1481, 2004.

\bibitem{witten1999managing}
I.~H. Witten, A.~Moffat, and T.~C. Bell, \emph{Managing gigabytes: compressing
  and indexing documents and images}.\hskip 1em plus 0.5em minus 0.4em\relax
  Morgan Kaufmann, 1999.

\bibitem{lloyd1982least}
S.~P. Lloyd, ``Least squares quantization in pcm,'' \emph{Information Theory,
  IEEE Transactions on}, vol.~28, no.~2, pp. 129--137, 1982.

\bibitem{barlow1989finding}
H.~Barlow, T.~Kaushal, and G.~Mitchison, ``Finding minimum entropy codes,''
  \emph{Neural Computation}, vol.~1, no.~3, pp. 412--423, 1989.

\bibitem{attux2011immune}
R.~Attux, E.~Nadalin, L.~Duarte, R.~Suyama \emph{et~al.}, ``An immune-inspired
  information-theoretic approach to the problem of {ICA} over a {G}alois
  field,'' in \emph{Information Theory Workshop (ITW), 2011 IEEE}.\hskip 1em
  plus 0.5em minus 0.4em\relax IEEE, 2011, pp. 618--622.

\bibitem{painsky2015generalized}
A.~Painsky, S.~Rosset, and M.~Feder, ``Generalized independent component
  analysis over finite alphabets,'' \emph{arXiv preprint arXiv:1508.04934},
  2015.

\bibitem{bairamov2010limit}
I.~Bairamov, A.~Berred, and A.~Stepanov, ``Limit results for ordered uniform
  spacings,'' \emph{Statistical Papers}, vol.~51, no.~1, pp. 227--240, 2010.

\bibitem{young199175}
R.~M. Young, ``75.9 euler's constant,'' \emph{The Mathematical Gazette}, pp.
  187--190, 1991.

\bibitem{new2004lexique}
B.~New, C.~Pallier, M.~Brysbaert, and L.~Ferrand, ``Lexique 2: A new french
  lexical database,'' \emph{Behavior Research Methods, Instruments, \&
  Computers}, vol.~36, no.~3, pp. 516--524, 2004.

\bibitem{macqueen1967some}
J.~MacQueen \emph{et~al.}, ``Some methods for classification and analysis of
  multivariate observations,'' in \emph{Proceedings of the fifth Berkeley
  symposium on mathematical statistics and probability}, vol.~1, no.~14.\hskip
  1em plus 0.5em minus 0.4em\relax Oakland, CA, USA., 1967, pp. 281--297.

\bibitem{chou1989entropy}
P.~Chou, T.~Lookabaugh, R.~M. Gray \emph{et~al.}, ``Entropy-constrained vector
  quantization,'' \emph{Acoustics, Speech and Signal Processing, IEEE
  Transactions on}, vol.~37, no.~1, pp. 31--42, 1989.

\bibitem{gersho1979asymptotically}
A.~Gersho, ``Asymptotically optimal block quantization,'' \emph{Information
  Theory, IEEE Transactions on}, vol.~25, no.~4, pp. 373--380, 1979.

\bibitem{zamir2014lattice}
R.~Zamir, \emph{Lattice Coding for Signals and Networks: A Structured Coding
  Approach to Quantization, Modulation, and Multiuser Information
  Theory}.\hskip 1em plus 0.5em minus 0.4em\relax Cambridge University Press,
  2014.

\end{thebibliography}

\begin{IEEEbiographynophoto}{Amichai Painsky}
received his B.Sc. degree in Electrical Engineering from Tel Aviv University (2007) and his M.Eng. degree in Electrical Engineering from Princeton University (2009). He is currently carrying a Ph.D. at the Statistics department of Tel Aviv University School of Mathematical Sciences. His research interests include Data Mining, Machine Learning, Statistical Learning and their connection to Information Theory
\end{IEEEbiographynophoto}

\begin{IEEEbiographynophoto}{Saharon Rosset}
is an Associate Professor in the department of Statistics and Operations Research at Tel Aviv University. His research interests are in Computational Biology and Statistical Genetics, Data Mining and Statistical Learning. Prior to his tenure at Tel Aviv, he received his PhD from Stanford University in 2003 and spent four years as a Research Staff Member at IBM Research in New York. He is a five-time winner of major data mining competitions, including KDD Cup (four times) and INFORMS Data Mining Challenge, and two time winner of the best paper award at KDD (ACM SIGKDD International Conference on Knowledge Discovery and Data Mining)
\end{IEEEbiographynophoto}

\begin{IEEEbiographynophoto}{Meir Feder}
(S'81-M'87-SM'93-F'99) received the B.Sc and M.Sc degrees
from Tel-Aviv University, Israel and the Sc.D degree from the Massachusetts
Institute of Technology (MIT) Cambridge, and the Woods Hole
Oceanographic Institution, Woods Hole, MA, all in electrical
engineering in 1980, 1984 and 1987, respectively.

After being a research associate and lecturer in MIT he joined
the Department of Electrical Engineering - Systems, School of Electrical Engineering, Tel-Aviv
University, where he is now a Professor and the incumbent of the Information Theory Chair.
He had visiting appointments at the Woods Hole Oceanographic Institution, Scripps
Institute, Bell laboratories and has been a visiting
professor at MIT. He is also extensively involved in the high-tech
industry as an entrepreneur and angel investor.
He co-founded several companies including Peach Networks,
a developer of a server-based interactive TV solution which
was acquired by Microsoft, and Amimon a
provider of ASIC's for wireless high-definition A/V connectivity.

Prof. Feder is a co-recipient of the 1993 IEEE Information Theory
Best Paper Award. He also received the 1978 "creative thinking"
award of the Israeli Defense Forces, the 1994 Tel-Aviv University
prize for Excellent Young Scientists, the 1995 Research Prize of
the Israeli Electronic Industry, and the research prize in applied
electronics of the Ex-Serviceman Association, London, awarded by
Ben-Gurion University.
\end{IEEEbiographynophoto}

\vfill



\end{document}